\theoremstyle{plain}
\newtheorem{theorem}{Theorem}[section]
\newtheorem{corollary}[theorem]{Corollary}
\newtheorem*{definition*}{Definition}
\theoremstyle{remark}
\newtheorem{example}[theorem]{Example}
\newtheorem*{remark*}{Remark}
\newtheorem*{example*}{Example}
\newtheorem*{notation*}{Notation}
\numberwithin{equation}{section}
\def\R{{\mathbb R}}
\def\P{{\mathbb P}}
\def\B{{\rm BF}}
\def\bic{{\rm BIC}}
\def\nbic{{\rm BIC\_HES}}
\def\det{{\rm det}}
\definecolor{viola}{rgb}{0.3,0,0.7}
\definecolor{ciclamino}{rgb}{0.5,0,0.5}
\definecolor{rosso}{rgb}{0.8,0,0}
\newcommand{\beq}{\begin{equation}}
\newcommand{\eeq}{\end{equation}}
\newcommand{\bal}{\begin{aligned}}
\newcommand{\eal}{\end{aligned}}
\newcommand{\ben}{\begin{enumerate}}
\newcommand{\beni} {\begin{enumerate}[(i)]}
\newcommand{\een}{\end{enumerate}}
\newcommand{\bit}{\begin{itemize}}
\newcommand{\eit}{\end{itemize}}
\newcommand{\beqw}{\begin{equation*}}
\newcommand{\eeqw}{\end{equation*}}
\newcommand{\bex}{\begin{example}}
\newcommand{\eex}{\end{example}}
\newcommand{\bre}{\begin{example}}
\newcommand{\ere}{\end{example}}
\newcommand{\bma}{\begin{bmatrix}}
\newcommand{\ema}{\end{bmatrix}}
\title[Bicnew]{A Modified Bayesian Criterion for Model Selection in Mixed and Hierarchical Frameworks}
\author[D. D. J. Ramirez Ramirez]{Diogenes de Jesus Ramirez\textsuperscript{*}}
\address[D. D. J. Ramirez Ramirez]{Department of Mathematics,
National University of Colombia, Manizales Campus, Manizales, Colombia.}
\email{ddramirezra@unal.edu.co}
\thanks{\textsuperscript{*}Corresponding author: Name: Diogenes De Jesus; Surname: Ramirez Ramirez; email: ddramirezra@unal.edu.co}
\author[A.~Melchor Hernandez]{Anderson Melchor Hernandez}
\address[A.~Melchor Hernandez]{Department of Mathematics, University of Bologna, Piazza di Porta San Donato 5, 40126, Bologna (Italy)}
\email{anderson.melchor@unibo.it}
\author[Isabel Cristina Ramírez]{Isabel Cristina Ramírez}
\address[Isabel Cristina Ramírez]{Departamento de Matematicas,
National University of Colombia, Medellin Campus, Medellin, Colombia.}
\email{iscramirezgu@unal.edu.co}
\author[Luis Raúl Pericchi]{Luis Raúl Pericchi}
\address[Luis Raúl Pericchi]{University of Puerto Rico, Río Piedras Campus, San Juan, Porto Rico}
\email{luis.pericchi@upr.edu}
\date{\today}
\keywords{Bayes factor, model selection, mixed models, Hessian matrix, BIC, Laplace approximation}
\begin{document}
\subjclass[2020]{62F15
 ,62J12
 ,6208
 ,62H12
}

\begin{abstract}
In this work, we propose a modified Bayesian Information Criterion (BIC) specifically designed for mixture models and hierarchical structures. This criterion incorporates the determinant of the Hessian matrix of the log-likelihood function, thereby refining the classical Bayes Factor by accounting for the curvature of the likelihood surface. Such geometric information introduces a more nuanced penalization for model complexity. The proposed approach improves model selection, particularly under small-sample conditions or in the presence of noise variables. Through theoretical derivations and extensive simulation studies—including both linear and linear mixed models—we show that our criterion consistently outperforms traditional methods such as BIC, Akaike Information Criterion (AIC), and related variants. The results suggest that integrating curvature-based information from the likelihood landscape leads to more robust and accurate model discrimination in complex data environments.
\end{abstract}

\maketitle

\tableofcontents

\section{Introduction}
The increasing need for information storage and the challenges that come with its analysis imply new forecasting challenges for the various agents of knowledge and their specific areas of expertise. Therefore, biostatistics, epidemiology, economics, psychology, sociology, and many other branches of human knowledge face daily challenges within their organizations, not only in terms of classified information but also in order to implement and develop research based on the large volume of stored data to find patterns of behavior. Typically, the stored information is of great interest to the various actors in society who collect and store it, as mathematical models can be constructed from a set of variables that closely emulate the observed reality. These models can be evaluated by implementing a broad spectrum of hypothesis tests that largely allow for classification, individualization, and selection of a model whose mathematical context allows for a coherent link between the researcher and the simulated reality. However, any statistical test loses not only its meaning but also its significance to the extent that the exact manner in which nature generates a set of data and the existing functionality of interaction between the characteristics or variables that are part of the model is unknown. Continuing with what has been described, the need to have and implement statistical models is becoming increasingly useful once linked with modern information and communication technologies. These models allow not only academic but also business agents to make decisions in an increasingly controlled environment of uncertainty, and that is when science and academia become relevant. In light of the above, statistics and its way of using a set of data through the implementation of well-known methods and techniques not only allow for inference through the use of a specific data sample but also prompt thinking about the behavior of a specific population, sample, or data set with a measurable risk of error in terms of probability. On the other hand, the implementation of statistical models in everyday life is, according to \cite{wallas}, an approximation to reality based on a mathematically unknown truth. For the authors, there are no true models that can manifest a reality that is not only perfect but also complete. The specification of a correct model is important for making consistent and efficient inferences. Unfortunately, it is not always easy to know if one has a correct model, and thus one needs to compare models and decide which one is best among them, i.e., the model that is most supported by the data. If the models are nested, the task is particularly simple, so model selection is equivalent to a parameter test \cite{pawitan2001}. In the last years, linear mixed models have been extensively used for the analysis of longitudinal data in medicine, psychology, and other fields \cite{gelman2021,l1990, vehtari2017}. Such a class of statistical models include fixed and random effects which endow these models with a hierarchical structure. These models are typically used when there is no independence in the data. For instance, let us think about students  which could be sampled from within classrooms. The core of mixed models is that they incorporate fixed and random effects. A fixed effect is a parameter that does not vary. For example, we may assume there is some true regression line in a population. In contrast, random effects are parameters that are themselves random variables. In the last years, a prominent interest has been given to Bayesian linear mixed models, and specifically on Bayesian factor hypothesis tests \cite{gabry2020,rouder2012,watanabe2010}. However, the availability of a huge literature about the Bayesian procedure, it remains a considerable ambiguity on the most appropriate Bayes factor hypothesis that must then be applied to quantify the number of variables needed to support an experiment effect, more specifically, when considering different random terms in the model.\\
In this work, we introduce a perturbed Bayesian rule that can helps us to decide how to exclude variables from a data which can be considered as usefulness for a determined model. In the scientific community,  several times scientists are trained in  the classical null hypothesis significance testing, in short (NHST). Let us for instance consider the case where one needs to test a specific test hypothesis (e.g., the effect of a vaccine over an ancient community). In the classical approach, NSHT works by considering a null hypothesis (e.g., the vaccine has already showed an effect) and thus one needs to verify such a requirement throughout a sample data. Such a statistic test is usually compared with a sample of distributions that would arise if the null hypothesis were correct. However such a approach have some limitations, say, when we can reject the null hypothesis, but we cannot simply accept the null hypothesis. Henceforth, To overcome such a limitation, we can fruitfully exploit the so-called Bayes factor indicator, in short, $\bic$ \cite{faulk2018}. Hence, the use of this criteria needs the specification of a null model ( say not effect of a vaccine), and an alternative model. In the frequentistic world, a useful approach for the specification of the alternative mixed effects model of the full data is to ask for a model that includes all fixed and random effects required by the experiment (i.e., to consider the maximal model)\cite{van2021bayes}.  Even though there are several null models that the maximal model can
be compared to, it is quite appropriate the specification of the null because it defines the question we ask about the effects. For instance, different decisions need to be made when testing for the effect of a manipulation in an experiment, and how
to quantify effects, and how to set prior distributions. After fitting a Bayesian model, one usually wants to measure its predictive accuracy.  In the recent years, a trend of cross validation and information technique's have  been used \cite{akaike1998,piironen2017,spiegelhalter1996} to fit models. In the case of a cross-validation approach, one requires re-fitting the model with different training sets. The Watanabe-Akaike information criterion (WAIC) is an another statistical approach which can be seen as an improvement of the deviance information criterion (DIC) for Bayesian models.  WAIC is a Bayesian method and it uses the entire posterior distribution of parameters. Furthermore, it is invariant under parametrizations \cite{gelman2014understanding,spiegelhalter2014deviance,watanabe2010}.\\
{\bf Our contribution}\\
In this work, we address Bayesian inference for mixed-effects models by proposing a modified Bayes factor hypothesis test. Specifically, we extend the usual Bayesian Information Criterion (BIC) by incorporating an additional term involving the determinant of the Hessian matrix of the log-likelihood evaluated at its maximum. This modification is motivated by the observation that, in some models, the optimization of the log-likelihood function can be particularly challenging or even intractable. The inclusion of the determinant term provides a way to account for the local curvature of the likelihood surface and, consequently, for the amount of information contained in the model. Formally, for a given model hypothesis $H$, we define the \emph{perturbed BIC} as
\begin{align}
\nbic(H) := \bic(H) + \log\left(\det(P)\right),
\end{align}
where $P$ denotes the Fisher information matrix, $\det(\cdot)$ its determinant, and $\log(\cdot)$ the natural logarithm. The inclusion of the additional $\log(\det(P))$ term is not arbitrary; apart from its mathematical derivation (see Corollary \ref{1cor} and \cite[Section 7.2.2]{held2014applied}), it captures structural aspects of hierarchical or heterogeneous data that the standard BIC neglects. To illustrate this, consider the following simple mixed-effects model:
\begin{align*}
X_{il} = \mu_i + \epsilon_{il}, \quad \text{for } i = 1, \dots, p \text{ groups and } l = 1, \dots, r \text{ observations per group,}    
\end{align*}
where $\epsilon_{il} \sim N(0, \sigma^2)$ are independent and identically distributed random variables. Computing $\log(\det(P))$ in this setting (see Section \ref{sec:examples}) yields
\begin{align*}
\log\left(\det(P)\right) = (p + 1)\log(r) + \log(p) - \log(2) - (p + 2)\log(\hat{\sigma}^2).
\end{align*}
In large-sample contexts, $\log(\det(P))$ is often neglected, as its contribution becomes asymptotically negligible. However, in small-sample settings this term can be substantial, introducing an explicit data dependence through the empirical estimator $\hat{\sigma}^2$. Notably, when a model fits the data exceedingly well (i.e., when $\hat{\sigma}^2$ is very small), the term $-(p+2)\log(\hat{\sigma}^2)$ acts as a strong positive penalty. This behavior discourages overfitting by penalizing models that explain random noise rather than genuine structure in the data. Hence, it becomes clear that the additional $\log(\det(P))$ term introduces an adaptive penalty that depends on the hierarchical structure of the data. Indeed, $\det(P)$ reflects the overall amount of information in the model parameters, which in this simple setting depends on the number of groups $p$, the number of observations per group $r$, and the residual variance $\hat{\sigma}^2$. The resulting criterion therefore penalizes models not only for their dimensionality, as the classical BIC does, but also for their internal informational structure \cite{amari2016information}. This yields a more flexible adjustment, particularly relevant in unbalanced or small-sample mixed-effects models, where the Fisher information is not homogeneous across parameters. Finally, in analogy with the {\rm LOO} method discussed in \cite{vehtari2017}, we consider a predictive approximation of our Bayesian criterion through the computed log pointwise predictive density, denoted by $\widehat{\mathrm{lpd}}$.\\
{\bf The paper is organized as follows.}\\
In Section \ref{sec:backgorund}, we recall key concepts related to the Bayes factor. Section \ref{sec:perturbation} introduces our proposed criterion and presents Corollary \ref{1cor} and Theorem \ref{betterthm}, which provide the theoretical motivation for our modified Bayes criterion. Furthermore, \autoref{thm:consistency} guarantees its consistency.

Section \ref{sec:examples} demonstrates the practical effectiveness of the criterion by analyzing a model with random effects, where we explicitly compute the coefficient $\log(\det(P))$. In Subsection \ref{comparison}, we compare our approach with the classical BIC and highlight its advantages, particularly in small-sample scenarios.

Section \ref{simulations} presents simulation studies involving linear, mixed, binomial, and Poisson models to evaluate the performance of our criterion against other established information criteria. Subsection \ref{sec:linearcase} focuses on the linear case, while Subsection \ref{sec:mixedcase} addresses mixed models. Section \ref{sec:binomialpoisson} covers binomial and Poisson models.

In Section \ref{sec:proofs}, we provide detailed proofs of Corollary \ref{1cor}, Theorem \ref{betterthm}, and \autoref{thm:consistency}. Finally, Section \ref{sec:futureworks} outlines potential directions for future research.

\section{Background and Related Works}\label{sec:backgorund}
In the next, we recall the main feature about the Bayes factor closely following \cite{faulk2018}.
\subsection{The Bayes factor}
Bayesian inference is a method of measurement that is based on the computation of $\P(H\vert D)$, which is the posterior of a hypothesis $H$, given data $D$. Bayes' theorem casts this probability as

\begin{align}\label{b1}
    \P(H\vert D)=\frac{\P(D\vert H)\cdot \P(H)}{\P(D)}
\end{align}
where $\P(H)$ is usually referred to as prior probability to hypothesis (or sample) $H$. Using Bayes' theorem, it is trivial to see that

\begin{align}\label{b2}
    \frac{\P(H_{0}\vert D)}{\P(H_{1}\vert D)}=\frac{\P(D\vert H_{0})}{\P(D\vert H_{1})}\cdot\frac{\P(H_{0})}{\P(H_{1})}.
\end{align}
In this equation the well-known Bayes factor is

\begin{align}\label{Bayes}
   \B_{0,1}:= \frac{\P(D\vert H_{0})}{\P(D\vert H_{1})}.
\end{align}
It is important to noticing dependencies on subscripts.  Similary, we can see that 
\begin{align*}
    \B_{1,0}=\frac{1}{\B_{0,1}},
\end{align*}
and more in general, this relation for every suitable hypothesis $H_{i}$ where $i$ is an opportune index. While equation \eqref{Bayes}
may seem apparently quite simple, it is much difficult to compute with practical examples. This is because to compute the numerator or denominator one needs to introduce several parametrizations. To be more precise, the above formula \eqref{Bayes} usually is written as

\begin{align}\label{Bayes1}
    \B_{0,1}=\frac{\int_{\Theta_{0}}\P(D\vert H_{0},\theta)g_{0}(\theta)d\theta}{\int_{\Theta_{1}}\P(D\vert H_{1},\theta)g_{1}(\theta)d\theta}
\end{align}
where $\Theta_{0}$ and $\Theta_{1}$ are parameter spaces for models $H_{0}$ and $H_{1}$, respectively, and $g_{0}$ and $g_{1}$ are the prior probability density functions of the parameters of $H_{0}$ and $H_{1}$, respectively. Notice that to computing $\B_{0,1}$, one must specify the priors $g_{0}$ and $g_{1}$. This however cause a theoretical problem of estimation and thus $\B_{0,1}$ becomes to be inaccessible without such a previous information. For this reason, in the literature various  classification schemes have been introduced. In \cite{raftery1995} has used a classification where $\bic$ between $1$ and $3$ are considered weak evidence, those between $3$ to $20$ can be seen as a positive evidence, and while those in the range [$20$, $150$] constitute strong evidence, and finally beyond $150$ very strong evidence. In some sense, $\bic$ only provides an index of preference for one hypothesis rather other one when the data set does not change a lot the number of variables. When the models being treated have mismatches with the data set, sometimes one needs to evaluate their predictive behavior, and thus to estimate their error \cite{vehtari2002}. 

\section{A perturbation of Bayes factor}\label{sec:perturbation}

For a given model $H_{i}$, the $\bic$ is defined as

\begin{align*}
    \bic(H_{i}):=-2\log(L_{i}) +p_{i}\cdot \log(n),
\end{align*}
where $n$ is the number of observations, $p_{i}$ is the number of free parameters of model $H_{i}$, and $L_{i}$ is the maximum likelihood for model $H_{i}$. In \cite{wagenmakers2007} proposed that
\begin{align*}
    \B_{0,1}\approx {\rm exp \left(\bic\left(\Delta_{1,0}/2\right)\right)}
\end{align*}
where $\bic(\Delta_{1,0}):=\bic(H_{1})-\bic(H_{0})$.
This however, does not enable to consider $H_{0}$ and $H_{1}$ as the product of a same stochastic process but mathematically being different objects. Suppose that $\hat{\theta}_{1}$ and $\hat{\theta}_{0}$ are the parameter whose maximize $L_{1}$ and $L_{0}$, respectively. To overcome such a technical point, we now aim to propose a different approach to enrich the Bayes factor. 
\subsection{The proposed criterion}
Here, we propose to take the following:
 \begin{align}\label{newbic}
    \nbic(H):= \bic(H) +log\left(\det(P)\right)
     \end{align}
 where $P$ is the Fisher information matrix of size $p\times p$ which is specified by a statistical model $H$ depending on $p$ parameters. As anticipated in the introduction, the term $log\left(\det(P)\right)$ can be justified by looking at a second-order expansion of the likelihood function. Let us recall that, from an analytical point of view, the Fisher Information can be regarded as the minus Hessian matrix of the log-likelihood function. Determining whether its eigenvalues are positive or negative therefore allows us to classify its stationary points. Hence, the nature of equation (\ref{newbic}) is that it takes into account the "geometry" of the space where the parameters lie. Nevertheless, a price to pay is that this criterion would then add a further term of $p\log(n)$ where $n$ represents the sample size of the model $H$. We collect this observation in the following Corollary:

 \begin{corollary}\label{1cor}
 Let $\Theta_{i}$ be a sample set in $\R^{d}$ depending of family of parameters of size $p$ and suppose that it is invariant by rotations ( i.e., for all square matrix $P$ such that $P^{2}=P$, then $P\Theta=\Theta$). Suppose that $\tilde{\theta}_{i}\in \Theta$ is the vector of parameters which maximizes the likelihood function ${\rm log}\left( f_{i}(\cdot \vert \theta)g_{i}(\theta)\right)$ and such a function is sufficiently smooth. Then there exists a definite positve matrix $P_{i}$ of size $p\times p$ such that
 \begin{align*}
   \P(y\vert H_{i})=f_{i}(y\vert \tilde{\theta}_{i})g_{i}(\tilde{\theta}_{i})\frac{(2\pi)^{p/2}}{\det(P_{i})^{1/2}}.
 \end{align*}
 \end{corollary}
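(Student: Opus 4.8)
The plan is to recognize the claimed identity as the (second-order) Laplace approximation of the marginal likelihood, so that the matrix $P_i$ is forced to be the negative Hessian of the log-integrand at its maximizer. First I would write the evidence in integral form, exactly as in \eqref{Bayes1}, namely
\begin{align*}
\P(y\vert H_i) = \int_{\Theta_i} f_i(y\vert \theta)\, g_i(\theta)\, d\theta,
\end{align*}
and set $h(\theta) := \log\left( f_i(y\vert \theta)\, g_i(\theta)\right)$, so that the integrand is simply $e^{h(\theta)}$. Since $\tilde\theta_i$ maximizes $h$ and $h$ is sufficiently smooth, the gradient $\nabla h(\tilde\theta_i)$ vanishes and, assuming the maximizer is non-degenerate, the Hessian $\nabla^2 h(\tilde\theta_i)$ is negative definite; defining $P_i := -\nabla^2 h(\tilde\theta_i)$ then produces a positive definite matrix (the observed Fisher information), which is precisely the object appearing in \eqref{newbic}.

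Next I would Taylor-expand $h$ about $\tilde\theta_i$ to second order,
\begin{align*}
h(\theta) = h(\tilde\theta_i) - \tfrac{1}{2}(\theta-\tilde\theta_i)^{\top} P_i (\theta-\tilde\theta_i) + o\left(\norma{\theta-\tilde\theta_i}^2\right),
\end{align*}
the linear term being absent by stationarity. Substituting the leading quadratic form into $e^{h}$ factors the integrand as $f_i(y\vert\tilde\theta_i)\, g_i(\tilde\theta_i)$ times a Gaussian kernel. Here is where the rotation-invariance hypothesis on $\Theta_i$ enters: because $P_i$ is symmetric positive definite it diagonalizes as $P_i = Q\Lambda Q^{\top}$ with $Q$ orthogonal, and the assumed invariance of $\Theta_i$ under orthogonal changes of coordinates lets me pass to variables in which the quadratic form is separable while the effective domain remains all of $\R^p$. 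The resulting integral is the standard multivariate Gaussian normalizing constant,
\begin{align*}
\int_{\R^p} e^{-\frac{1}{2} z^{\top} P_i z}\, dz = \frac{(2\pi)^{p/2}}{\det(P_i)^{1/2}},
\end{align*}
which, combined with the factored-out value $e^{h(\tilde\theta_i)} = f_i(y\vert\tilde\theta_i)\, g_i(\tilde\theta_i)$, yields the stated formula.

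The main obstacle is that the asserted equality is, strictly speaking, only the leading-order term of the Laplace expansion rather than an exact identity, so a careful argument must control both the remainder coming from the higher-order Taylor terms and the contribution of the tails away from $\tilde\theta_i$. The smoothness hypothesis handles the former (the $o(\cdot)$ remainder is uniformly small on a shrinking neighborhood of the maximizer), while the rotation-invariance assumption is what legitimizes replacing the integral over $\Theta_i$ by a Gaussian integral over $\R^p$ after diagonalization. I would make these two approximation steps precise following the treatment in \cite[Section 7.2.2]{held2014applied}, and note that the induced error is of smaller order, so that the displayed expression captures $\P(y\vert H_i)$ to the accuracy required for the criterion \eqref{newbic}.
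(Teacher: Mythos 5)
Your proposal follows essentially the same route as the paper's own proof: write the evidence as $\int_{\Theta_i} e^{h(\theta)}\,d\theta$, Taylor-expand to second order about the maximizer so the linear term vanishes, set $P_i=-\nabla^2 h(\tilde\theta_i)$, diagonalize by an orthogonal matrix (this is where the rotation-invariance of $\Theta_i$ is invoked), and evaluate the resulting Gaussian integral as $(2\pi)^{p/2}/\det(P_i)^{1/2}$. If anything, your version is more honest than the paper's, which silently treats the second-order expansion as exact and integrates the Gaussian as if $\Theta_i=\R^p$, whereas you explicitly flag that the identity holds only to leading order in the Laplace expansion and that the Taylor remainder and the tail contribution must be controlled.
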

 With this Corollary, we can compare different models as follows.

\begin{theorem}\label{betterthm}
 Let us suppose that $\Theta_{i}$ is invariant by rotations, that is, for any orthogonal matrix $U$, $U(\Theta_{i})=\Theta_{i}$. Moreover, suppose that $\log(f_{i}(\vert H_{i})g_{i})$ are $C^{2}$ over all $\Theta_{i}$ and admits a unique stationary point for $i=0,1$.
 Then
 \begin{align}\label{logbif}
     \log(\B_{0,1})=\log\left(\frac{f_{0}(y\vert \tilde{\theta}_{0})g_{0}(\tilde{\theta}_{0})}{f_{1}(y\vert \tilde{\theta}_{1})g_{1}(\tilde{\theta}_{1})}\right)+\frac{1}{2}\log(2\pi)\left(\vert \tilde{\theta}_{0}\vert- \vert \tilde{\theta}_{1}\vert\right)
     +
     \frac{1}{2}\log\left( \frac{\det(P_{1})}{\det(P_{0})}\right)
 \end{align}
 where $P_{i}$ is Fisher information matrix given by $-\nabla^{2}\log \left(f_{i}(y\vert \theta)g_{i}(\theta)\right)$ at $\tilde{\theta}_{i}$, and $\vert \tilde{\theta}_{i}\vert={\rm length}(\tilde{\theta}_{i})$ for $i=0,1.$ Furthermore, if we suppose that $\vert \tilde{\theta}_{1}\vert=\vert \tilde{\theta}_{0}\vert$, then \eqref{logbif} looks like
 
 \begin{align}\label{betterbic}
     2\log(\B_{0,1})= \nbic(\Delta_{1,0}).
 \end{align}
 \end{theorem}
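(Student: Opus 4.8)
The first identity is a direct consequence of Corollary \ref{1cor}. The hypotheses imposed here (rotation invariance of each $\Theta_{i}$, the $C^{2}$ regularity of $\log(f_{i}(y\vert\theta)g_{i}(\theta))$, and the existence of a unique stationary point) are exactly those required to apply the Laplace-type approximation of that Corollary to each of the two competing models. The plan is therefore to substitute
\begin{align*}
\P(y\vert H_{i})=f_{i}(y\vert\tilde{\theta}_{i})g_{i}(\tilde{\theta}_{i})\frac{(2\pi)^{p_{i}/2}}{\det(P_{i})^{1/2}}, \qquad i=0,1,
\end{align*}
into the definition \eqref{Bayes} of $\B_{0,1}$, form the quotient, and take the natural logarithm. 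The prefactors $(2\pi)^{p_{i}/2}$ produce the term $\tfrac{1}{2}\log(2\pi)(\vert\tilde{\theta}_{0}\vert-\vert\tilde{\theta}_{1}\vert)$, where $\vert\tilde{\theta}_{i}\vert=p_{i}$ is the length (number of components) of the parameter vector, while the factors $\det(P_{i})^{-1/2}$ produce $\tfrac{1}{2}\log(\det(P_{1})/\det(P_{0}))$. This yields \eqref{logbif} immediately.

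For the second identity I would impose the balance condition $\vert\tilde{\theta}_{0}\vert=\vert\tilde{\theta}_{1}\vert=:p$, which annihilates the $\log(2\pi)$ term in \eqref{logbif}, and then multiply through by $2$, obtaining
\begin{align*}
2\log(\B_{0,1})=2\log\left(\frac{f_{0}(y\vert\tilde{\theta}_{0})g_{0}(\tilde{\theta}_{0})}{f_{1}(y\vert\tilde{\theta}_{1})g_{1}(\tilde{\theta}_{1})}\right)+\log\left(\frac{\det(P_{1})}{\det(P_{0})}\right).
\end{align*}
In parallel, I would expand the right-hand side of \eqref{betterbic} from the definitions, writing $\nbic(\Delta_{1,0})=\nbic(H_{1})-\nbic(H_{0})$ with $\nbic(H_{i})=\bic(H_{i})+\log(\det(P_{i}))$ and $\bic(H_{i})=-2\log(L_{i})+p_{i}\log(n)$. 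Since $p_{0}=p_{1}$, the dimensional penalties $p_{i}\log(n)$ cancel, leaving $\nbic(\Delta_{1,0})=2\log(L_{0}/L_{1})+\log(\det(P_{1})/\det(P_{0}))$.

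The decisive step — and the only genuinely non-mechanical one — is to identify the two maximized quantities. Within the present Bayesian framework the object being optimized is the joint density $f_{i}(y\vert\theta)g_{i}(\theta)$, whose maximizer is precisely the mode $\tilde{\theta}_{i}$ furnished by Corollary \ref{1cor}; accordingly the maximum likelihood $L_{i}$ entering the definition of $\bic(H_{i})$ must be read as $L_{i}=f_{i}(y\vert\tilde{\theta}_{i})g_{i}(\tilde{\theta}_{i})$. With this reading we have $2\log(L_{0}/L_{1})=2\log\bigl(f_{0}(y\vert\tilde{\theta}_{0})g_{0}(\tilde{\theta}_{0})/f_{1}(y\vert\tilde{\theta}_{1})g_{1}(\tilde{\theta}_{1})\bigr)$, the two displayed expressions coincide term by term, and \eqref{betterbic} follows. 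I expect the main conceptual obstacle to lie exactly in fixing this convention for $L_{i}$ and checking that it is consistent with the way $\tilde{\theta}_{i}$ is defined in Corollary \ref{1cor}; the remaining manipulations are purely algebraic bookkeeping, with the cancellation of the $p_{i}\log(n)$ terms resting entirely on the balance hypothesis $\vert\tilde{\theta}_{0}\vert=\vert\tilde{\theta}_{1}\vert$.
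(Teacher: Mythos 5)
Your proposal is correct and follows essentially the same route as the paper: apply Corollary \ref{1cor} to each model, take the logarithm of the quotient to get \eqref{logbif}, and then, under $\vert\tilde{\theta}_{0}\vert=\vert\tilde{\theta}_{1}\vert$, match the result against $\nbic(\Delta_{1,0})$ after cancelling the $p\log(n)$ penalties. The ``decisive step'' you flag — reading $L_{i}$ as the maximized joint density $f_{i}(y\vert\tilde{\theta}_{i})g_{i}(\tilde{\theta}_{i})$ — is exactly the convention the paper adopts (silently, by simply defining $\log(L_{i})$ that way), so your account is, if anything, more explicit about the one non-mechanical identification involved.
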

 As an application of \autoref{betterthm}, let us consider data $y_{1},\ldots, y_{n}$ modeled as independent given parameters $\theta$. To maintain comparability with a given dataset and to get easier interpretation of \eqref{logbif}, we need an explicit representation of $\tilde{\theta}_{0}$ and $\tilde{\theta}_{1}$, respectively.  Suppose we have a prior distribution $\P(\theta)$, thus yielding a posterior distribution $\P(\theta\vert y)$ and a posterior predictive distribution $\P(\tilde{y}\vert y)=\int \P(\tilde{y}\vert \theta)\P(\theta\vert y)dy$. Sometimes it is useful to use the  so-called {\rm log pointwise predictive density} denoted as {\rm lpd}, and which is given by

\begin{align}\label{lpd1}
   {\rm  lpd}\coloneqq\sum_{i=1}^{n}\log(\P(y_{i}\vert y))=\sum_{i=1}^{n}\log\left(\int \P(y_{i}\vert \theta)\P(\theta\vert y) d\theta\right). 
\end{align}
This is an overestimated measure for the expected  log pointwise predictive density  for a new dataset. Notice that to compute the {\rm lpd} in practice, we need draws from the posterior distribution, which we label $\theta^{j}$, $j=1,\ldots p$, $p\in \mathbb{N}$. Thus {\rm lpd} can be approximated as

\begin{align}\label{alpd}
    \widehat{{\rm lpd}}=\sum_{i=1}^{n}\log\left(\frac{1}{p}\sum_{j=1}^{p}\P(y_{i}\vert \theta^{j})\right).
\end{align}
Hence, it is possible to estimate \eqref{logbif} as

\begin{align}\label{nwlogbif}
\begin{aligned}
\widehat{\log({\rm BF}_{0,1})}=&\sum_{i=1}^{n}\log\left(\frac{\frac{1}{p}\sum_{j=1}^{p}f_{0}(y_{i}\vert \theta_{0}^{j})}{\frac{1}{p}\sum_{j=1}^{p}f_{1}(y_{i}\vert \theta_{1}^{j})}\right) +  n\log\left(\frac{\frac{1}{p}\sum_{j=1}^{p}g_{0}(\theta_{0}^{j})}{\frac{1}{p}\sum_{j=1}^{p}g_{1}(\theta_{1}^{j})}\right)\\
&+\frac{1}{2}\log(2\pi)\left(\vert \tilde{\theta}_{0}\vert- \vert \tilde{\theta}_{1}\vert\right)
     +
     \frac{1}{2}\log\left( \frac{\det(P_{1})}{\det(P_{0})}\right).
\end{aligned}
\end{align}
To compute $\widehat{\log({\rm BF}_{0,1})}$ in practice, we can use the computed log pointwise predictive density, and thus

\begin{align}\label{estlpd}
 \begin{aligned}
\widehat{\log({\rm BF}_{0,1})}=&\widehat{{\rm lpd}}_{0}- \widehat{{\rm lpd}}_{1}+  n\log\left(\frac{\frac{1}{p}\sum_{j=1}^{p}g_{0}(\theta_{0}^{j})}{\frac{1}{p}\sum_{j=1}^{p}g_{1}(\theta_{1}^{j})}\right)\\
&+\frac{1}{2}\log(2\pi)\left(\vert \tilde{\theta}_{0}\vert- \vert \tilde{\theta}_{1}\vert\right)
     +
     \frac{1}{2}\log\left( \frac{\det(P_{1})}{\det(P_{0})}\right).
\end{aligned}  
\end{align}
In some specific cases (see Section \ref{sec:examples}), we may suppose in equation  \eqref{logbif}, that the size of the free parameters, $\theta_{0}$, and $\theta_{1}$, is the same and that the first term on the right-hand side of equation \eqref{logbif} is negligible. Hence, one obtains the following decisional approach:

 \begin{align*}
     \log(\B_{0,1})=\frac{1}{2}\log\left( \frac{\det(P_{1})}{\det(P_{0})}\right).
 \end{align*}
Therefore, although our estimator might initially appear to be an artificial, added term, it actually recovers discrepancies between complex datasets with the same number of parameters. This can occur, for example, when the models differ in only one component, as illustrated by the cases studied in equation (\ref{formulanew}) below. An important fact of $\nbic$ regards its consistency. This is the content of the following result.

\begin{theorem}\label{thm:consistency}
Let us consider the indicator criterion $\nbic$ defined as
\begin{align*}
\nbic=-2 \log L(\hat{\theta}) + p \log n + \log|J|
\end{align*}
where $\log L(\hat{\theta})$ is the maximized log-likelihood, $p$ is the number of parameters in the model, $n$ is the sample size (number of observations), and $J$ is the observed Fisher Information matrix (negative Hessian), evaluated at the maximum likelihood estimator $\hat{\theta}$.

Then it is consistent. More precisely, given a true data-generating model $M_0$ (with $p_0$ parameters) and an overfitted competing model $M_1$ (with $p_1$ parameters, $p_1 > p_0$ such that $M_0 \subset M_1$), the probability of selecting the true model approaches one as the sample size increases:
\begin{align*}
\lim_{n \rightarrow \infty} \mathbb{P}(\Delta\nbic > 0) = 1,
\end{align*}
where $\Delta\nbic=\nbic(M_1) - \nbic(M_0)$.
\end{theorem}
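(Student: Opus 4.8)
The plan is to expand $\Delta\nbic$ into three separate contributions and to show that the deterministic divergence generated by the dimension gap $p_1-p_0$ overwhelms every stochastically bounded term. Writing out the definition directly,
\begin{align*}
\Delta\nbic = -2\big(\log L_1(\hat\theta_1)-\log L_0(\hat\theta_0)\big) + (p_1-p_0)\log n + \log\frac{|J_1|}{|J_0|},
\end{align*}
so the argument reduces to controlling, one at a time, the log-likelihood-ratio term, the dimensionality penalty, and the new Hessian-determinant term. First I would handle the likelihood ratio: since $M_0\subset M_1$ and $M_0$ is the true data-generating model with true parameter lying in the interior of $\Theta_1$, the classical Wilks theorem applies under the standing regularity hypotheses (consistency of the MLE, $C^2$ smoothness of the log-likelihood, nonsingular limiting Fisher information), giving that $\Lambda_n:=2\big(\log L_1(\hat\theta_1)-\log L_0(\hat\theta_0)\big)$ converges in distribution to a $\chi^2_{p_1-p_0}$ law. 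In particular $\Lambda_n=O_p(1)$, so this term is stochastically bounded and contributes nothing to the leading asymptotics.

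Next I would treat the two penalty terms. The dimension penalty $(p_1-p_0)\log n$ is deterministic and, because $p_1>p_0$, diverges to $+\infty$ at rate $\log n$. The decisive new ingredient is the determinant term. Writing the observed information as a sum of per-observation contributions, $J_i=\sum_{k=1}^{n} j_k(\hat\theta_i)$, a uniform law of large numbers together with the consistency of $\hat\theta_i$ yields $n^{-1}J_i \to I_i(\theta_i^*)$ in probability, where $I_i(\theta_i^*)$ is the (assumed positive definite) per-observation Fisher information at the limiting parameter. Consequently $|J_i| = n^{p_i}\,|n^{-1}J_i| = n^{p_i}\big(|I_i(\theta_i^*)|+o_p(1)\big)$, so $\log|J_i| = p_i\log n + O_p(1)$ and therefore $\log(|J_1|/|J_0|) = (p_1-p_0)\log n + O_p(1)$.

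Combining the three pieces gives
\begin{align*}
\Delta\nbic = 2(p_1-p_0)\log n - \Lambda_n + O_p(1),
\end{align*}
in which the leading term $2(p_1-p_0)\log n\to+\infty$ while $\Lambda_n=O_p(1)$ and the determinant remainder are bounded in probability. Hence $\Delta\nbic\to+\infty$ in probability, which immediately delivers $\lim_{n\to\infty}\P(\Delta\nbic>0)=1$. It is worth noting that the extra $\log|J|$ term effectively doubles the BIC dimension penalty, which is precisely why consistency is preserved (indeed reinforced).

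I expect the main obstacle to be the rigorous justification of the determinant scaling $\log|J_i|=p_i\log n+O_p(1)$, rather than the likelihood ratio, which is standard. The delicate points are twofold: (i) establishing that $n^{-1}J_i$ converges to a positive definite limit uniformly on a neighborhood of the true parameter, which is exactly where the paper's smoothness and rotation-invariance assumptions, combined with nonsingularity of the Fisher information, must be invoked so that $|n^{-1}J_i|$ stays bounded away from $0$ and $\infty$; and (ii) ensuring validity for the overfitted model $M_1$, since in hierarchical and mixed frameworks the additional parameters (notably variance components) can drive the true value toward the boundary of $\Theta_1$, where both Wilks' theorem and the positive-definiteness of $I_1$ may break down. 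The interior regularity built into the hypothesis $M_0\subset M_1$ is what excludes this pathological case, and I would state it as an explicit assumption at the outset of the proof.
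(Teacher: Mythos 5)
Your proposal is correct and follows the same overall decomposition as the paper (likelihood-ratio term via Wilks, dimension penalty, Hessian-determinant term), but it diverges from the paper's argument in one substantive and interesting place: the treatment of $\log|J_1|-\log|J_0|$. The paper asserts that this difference is $O_p(1)$, justifying it only by an appeal to the Laplace expansion. Your analysis is sharper and, in the standard i.i.d.\ setting, more accurate: since the observed information is a sum of $n$ per-observation contributions, $|J_i|=n^{p_i}\bigl(|I_i(\theta_i^*)|+o_p(1)\bigr)$, so $\log|J_i|=p_i\log n+O_p(1)$ and the difference is $(p_1-p_0)\log n+O_p(1)$, which diverges rather than staying bounded. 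This changes the leading constant in the final asymptotics --- you obtain $\Delta\nbic=2(p_1-p_0)\log n+O_p(1)$ versus the paper's $(p_1-p_0)\log n+O_p(1)$ --- but since the extra divergence has the same (positive) sign, both versions deliver $\P(\Delta\nbic>0)\to 1$. Your version buys a correct accounting of how the new penalty actually scales (it doubles the effective BIC penalty, as you note), whereas the paper's claim that $\Delta H=O_p(1)$ is at best unjustified and at worst false as stated; your remark on the boundary problem for variance components in the overfitted mixed model is also a genuine regularity issue the paper does not address. The one thing to be careful about is that in hierarchical or clustered designs the information need not grow linearly in $n$ for every parameter (e.g.\ group-level parameters accrue information at the rate of the number of groups), so the clean $n^{p_i}$ scaling should be stated as an assumption rather than a consequence; but under the i.i.d.\ framing of the theorem your argument is complete.
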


\section{Examples}\label{sec:examples}
In this section, we analyze several models previously studied in \cite[Section 4]{perez2017scaled}, building upon their observations concerning the robust properties of the so-called Scaled Beta-2 distributions, which constitute a vast family of distributions used to tackle modeling scales in both hierarchical and non-hierarchical settings. Since we are interested in analyzing the formula \eqref{logbif}, let us consider the model discussed in \cite{perez2017scaled,gelman2006prior}:

\begin{align}\label{problem1}
X_{il} = \mu_i + \epsilon_{il} \quad \text{for } i=1, \dots, p \text{ groups and } l=1, \dots, r\, \text{observations per group.}    
\end{align}
This is a kind of Normal Hierarchical Model, specified as follows: The data data $x_{il}$ $l$-the observation in the $i$-th group, where the errors $\epsilon_{il} \sim N(0, \sigma^2)$ are i.i.d. real-valued Gaussian variables with mean null and variance $\sigma^2$. Notice that each group has the same mean value, and it is deterministic variable, but one may decide to choose even $\mu_{i}$ as a normal random variable. The parameter vector is $\theta = (\mu_1, \dots, \mu_p, \sigma^2)$, with dimension $p+1$. The log-likelihood function is:

\begin{align*}
l(\theta) = -\frac{pr}{2}\log(2\pi) - \frac{pr}{2}\log(\sigma^2) - \frac{1}{2\sigma^2}\sum_{i=1}^{p}\sum_{l=1}^{r}(X_{il} - \mu_i)^2.
\end{align*}
The maximum likelihood estimates (MLEs) are:
\begin{itemize}
    \item $\hat{\mu}_i = \bar{X}_{i.} = \frac{1}{r}\sum_{l=1}^{r}X_{il}$
    \item $\hat{\sigma}^2 = \frac{1}{pr}\sum_{i=1}^{p}\sum_{l=1}^{r}(X_{il} - \hat{\mu}_i)^2$
\end{itemize}
We calculate the Hessian matrix $\mathbf{H}(\theta)$, for general parameters before evaluating at the MLE. Hence, one gets

\begin{align}\nonumber
&\frac{\partial l}{\partial \mu_j} = \frac{1}{\sigma^2}\sum_{l=1}^{r}(X_{jl} - \mu_j),\\\nonumber
&\frac{\partial l}{\partial(\sigma^2)} = -\frac{pr}{2\sigma^2} + \frac{1}{2(\sigma^2)^2}\sum_{i=1}^{p}\sum_{l=1}^{r}(X_{il} - \mu_i)^2,\\\nonumber
&\frac{\partial^2 l}{\partial \mu_j^2} = -\frac{r}{\sigma^2}, \quad \frac{\partial^2 l}{\partial \mu_j \partial \mu_k} = 0 \quad (\text{for } j \neq k),\\\nonumber
&\frac{\partial^2 l}{\partial(\sigma^2)^2} = \frac{pr}{2(\sigma^2)^2} - \frac{1}{(\sigma^2)^3}\sum_{i=1}^{p}\sum_{l=1}^{r}(X_{il} - \mu_i)^2,\\\label{formu5}
&\frac{\partial^2 l}{\partial \mu_j \partial(\sigma^2)} = -\frac{1}{(\sigma^2)^2}\sum_{l=1}^{r}(X_{jl} - \mu_j).
\end{align}
We notice that \eqref{formu5} is not null for a general parameter $\mu_j$, and tha the Hessian matrix for a general $\theta$ is not block diagonal. However, when we evaluate it at the MLE, $\hat{\theta} = (\hat{\mu}_1, \dots, \hat{\mu}_p, \hat{\sigma}^2)$, the off-diagonal terms vanish. Specifically, the cross-derivative terms become:
\[
\left. \frac{\partial^2 l}{\partial \mu_j \partial(\sigma^2)} \right|_{\theta=\hat{\theta}} = -\frac{1}{(\hat{\sigma}^2)^2}\sum_{l=1}^{r}(X_{jl} - \hat{\mu}_j) = -\frac{1}{(\hat{\sigma}^2)^2} \times 0 = 0
\]
And the bottom-right corner term simplifies to:
\[
\left. \frac{\partial^2 l}{\partial(\sigma^2)^2} \right|_{\theta=\hat{\theta}} = \frac{pr}{2(\hat{\sigma}^2)^2} - \frac{pr\hat{\sigma}^2}{(\hat{\sigma}^2)^3} = -\frac{pr}{2(\hat{\sigma}^2)^2}
\]
Therefore, the Hessian evaluated at the MLE is block diagonal:
\[
\mathbf{H}(\hat{\theta}) = 
\begin{pmatrix}
-\frac{r}{\hat{\sigma}^2}\mathbf{I}_{p \times p} & \mathbf{0} \\
\mathbf{0}^T & -\frac{pr}{2(\hat{\sigma}^2)^2}
\end{pmatrix}
\]
By definition, the Observed Fisher Information matrix is $\mathbf{I}(\hat{\theta}) = -\mathbf{H}(\hat{\theta})$:
\begin{align*}
\mathbf{I}(\hat{\theta}) = 
\begin{pmatrix}
\frac{r}{\hat{\sigma}^2}\mathbf{I}_{p \times p} & \mathbf{0} \\
\mathbf{0}^T & \frac{pr}{2(\hat{\sigma}^2)^2}.
\end{pmatrix}
\end{align*}
The determinant is the product of the determinants of the diagonal blocks:
\begin{align*}
\det(\mathbf{I}(\hat{\theta})) = \left(\frac{r}{\hat{\sigma}^2}\right)^p \times \frac{pr}{2(\hat{\sigma}^2)^2}  
\end{align*}
and the log-determinant is:
\begin{align*}
\log\det(\mathbf{I}(\hat{\theta})) &= p\log\left(\frac{r}{\hat{\sigma}^2}\right) + \log\left(\frac{pr}{2(\hat{\sigma}^2)^2}\right) \\
&= p\log(r) - p\log(\hat{\sigma}^2) + \log(p) + \log(r) - \log(2) - 2\log(\hat{\sigma}^2) \\
&= (p+1)\log(r) + \log(p) - \log(2) - (p+2)\log(\hat{\sigma}^2).
\end{align*}
\subsection*{Comparison between BIC and the modified criterion}\label{comparison}
The key distinction between the standard BIC and the proposed criterion lies in the structure of their penalty terms.  
For this example \eqref{problem1}, the total number of parameters is $p_{\text{total}} = p + 1$.  
Assuming $n = r$ as the effective sample size (each $\mu_i$ is estimated from $r$ observations), the classical BIC penalty is
\begin{align}\label{penaltyBIC}
\text{Penalty}_{\text{BIC}} = p_{\text{total}}\log n = (p + 1)\log(r).
\end{align}

This formulation ignores the internal structure of the model and the amount of information available for each parameter.  
By contrast, incorporating the log-determinant of the Fisher information matrix yields
\begin{align}\label{penaltyBH}
\text{Penalty}_{\text{HES}} \coloneqq (p + 1)\log(r) + \log(p) - \log(2) - (p + 2)\log(\hat{\sigma}^2),
\end{align}
which retains the BIC term as its first component but adds three meaningful adjustments:
\begin{enumerate}
    \item \textbf{$+\log(p)$:} introduces a group-dependent penalty, capturing the hierarchical dimension of the model;
    \item \textbf{$-\log(2)$:} a constant scaling adjustment;
    \item \textbf{$-(p + 2)\log(\hat{\sigma}^2)$:} a data-dependent term reflecting the precision of model fit.
\end{enumerate}

When $\hat{\sigma}^2$ is small (indicating a tight fit), $\log(\hat{\sigma}^2)$ becomes large and negative, so the penalty term $-(p + 2)\log(\hat{\sigma}^2)$ grows positive and substantial.  
This adaptively penalizes models that may overfit by explaining random noise rather than true signal. For instance, consider two models $H_{0}$, and $H_{1}$ as in \eqref{problem1} with $r_{0}$, and $r_{1}$ observations per group, respectively. Assume both models have exactly $p+1$ parameters, and differ only in their variance terms $\sigma_{0}^{2}$, and $\sigma_{1}^{2}$, respectively. For simplicity, suppose the first term on the right-hand side of \eqref{logbif} is negligible. Hence,

\begin{align}\label{formulanew}
\log(\B_{0,1})=(p+1)\log(r_{1}/r_{0})+ (p+2)\log(\hat{\sigma}_{0}^{2}/\hat{\sigma}_{1}^{2}),  
\end{align}
where $\hat{\sigma}_{0}^{2},\hat{\sigma}_{1}^{2}$ are the empirical estimators of $\sigma_{0}^{2}$, and $\sigma_{1}^{2}$, respectively. Suppose that $0<\hat{\sigma}_{0}^{2}<\hat{\sigma}_{1}^{2}$, and that $r_{1}<r_{0}$. Then model $H_{0}$ contains more noise variables and, under the proposed BIC$\_$HES criterion, is effectively penalized. In contrast, the standard BIC does not guarantee effective discrimination between models that include redundant noise variables.

\section{Simulations}\label{simulations}
In this section, we present the simulation experiments conducted to assess the performance of the modified Bayesian Information Criterion (BIC$\_$HES), which incorporates the logarithm of the determinant of the Hessian matrix of the parameters. The main objective is to analyze how this modification affects model selection under different scenarios, in comparison with traditional criteria such as the standard BIC, Akaike Information Criterion (AIC), corrected AIC (AICc), the Corrected Akaike Information Factor (CAIF), and the Information Complexity Criterion (ICOMP) \cite{akaike1973,HURVICHTSAI1989,bozdogan1987model,bozdogan1988icompcriterion}. \\
In what follows, within the figures, we use the term metric to refer to the specific type of Bayesian factor employed for data analysis. The acronym PVCSVM stands for the percentage of times a given metric selects the correct model. This abbreviation derives from the Spanish phrase “porcentaje de veces que un criterio selecciona el verdadero modelo.” Additionally, we define BIC\_HES\_SP as the modified version of BIC\_HES obtained by subtracting the term $p\log(n)$, where $n$ denotes the sample size and, $p$ the number of parameters in the model. This adjustment is motivated by the observation that BIC\_HES, may, in certain cases, include the penalization $p\log(n)$ twice.

\subsection{Simulation Study for the Linear Model}\label{sec:linearcase}
This section describes the simulation design for the linear model. All simulations were carried out in R \cite{rsoftware}, version 4.3.2, on a Linux workstation equipped with 1 TB of RAM and 32 Intel Xeon 2.7 GHz cores. The simulation approach follows \cite{neath1997regression}, where a true data-generating model is defined, and additional noise variables are introduced to refit the model and evaluate the performance of various selection criteria. A total of 200 replications were conducted, generating 1,096,800 fitted models. The performance of each criterion was assessed using the empirical frequency with which it identified the true data-generating model.

The general model is defined as:

\begin{equation}
\label{EQlineal}
Y_i = \beta_0 + \sum_{k=2}^{10} \beta_k X_{ik} + \sum_{l=1}^{10} \beta_l X_{il} + \varepsilon_i, \quad i = 1, \ldots, n,\; k = 2,4,\ldots,10,\; l = 1,\ldots,10.
\end{equation}

where the noise variables correspond to indices $l = 1, \ldots, 10$.

The simulation procedure was implemented as follows:

\begin{enumerate}[label=\alph*)]
\item Generate $k$ independent regressor variables, where $k = 2, 4, 6, 8, 10$. Each variable follows a normal distribution with mean 10 and standard deviation 1.
\item Generate $k$ regression parameters, each drawn from a normal distribution with mean 10 and standard deviation 2.
\item Generate normally distributed errors with mean 0 and standard deviation 1.
\item Compute the response variables using Equation (\ref{EQlineal}) for sample sizes $n = 10, 15, 20, 50, 100, 500$, and varying numbers of regressors $X_k$, excluding the noise variables.
\item Generate $l = 1, \ldots, 10$ additional noise regressors from a normal distribution with mean 3 and standard deviation 3.
\item Append the noise variables to the models from step d), ensuring that their number does not exceed the number of original regressors.
\item Compute the information criteria (BIC, AIC, AICc, BIC\_HES, BIC\_HES\_SP, and CAIF).
\end{enumerate}

\begin{figure}[htbp]
	\centering
	\includegraphics[width=0.70\textwidth]{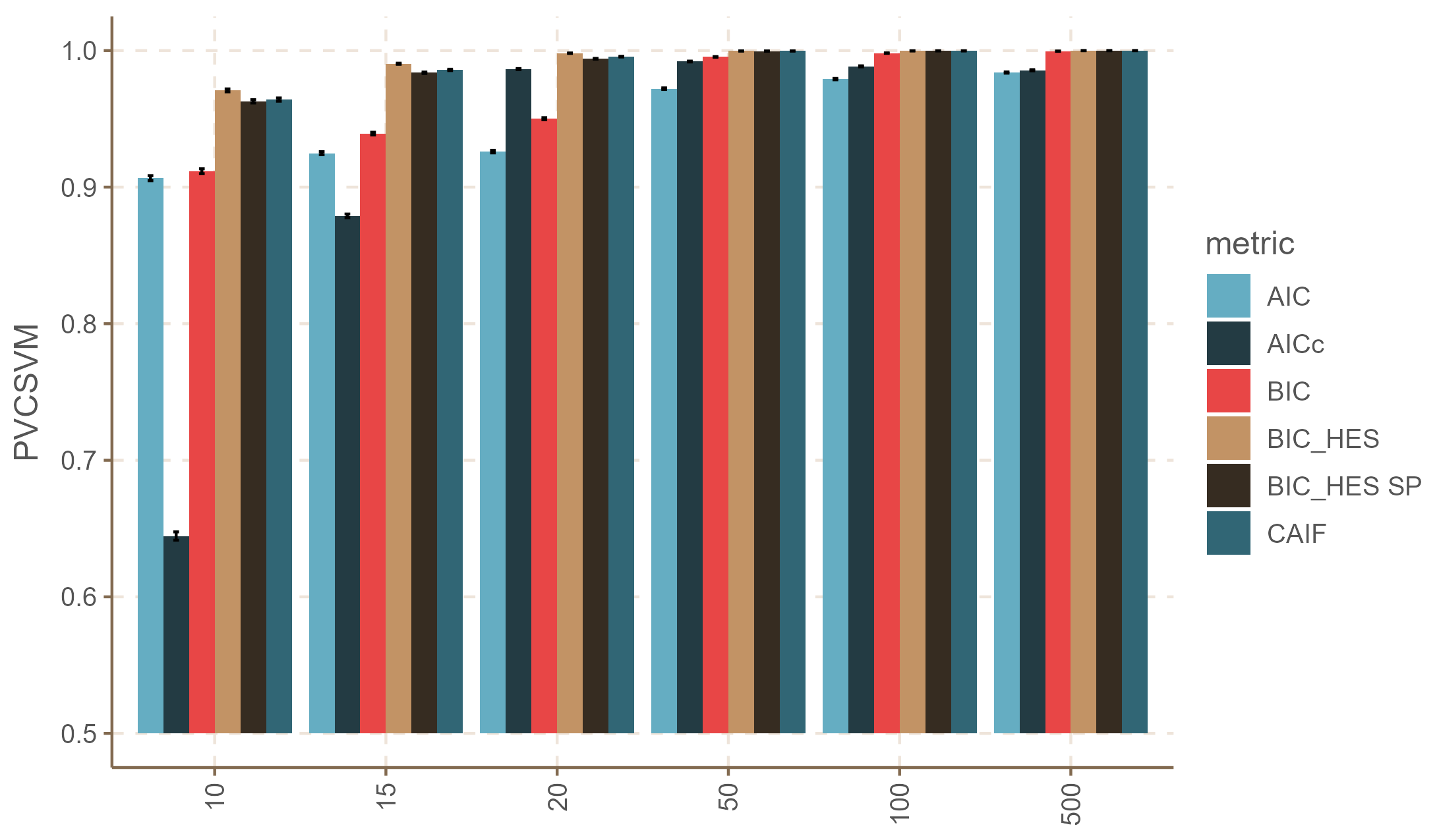}
	\caption{Information criteria vs sample size}
	\label{fig:criterioslinealnobs}
\end{figure}

In Figure \ref{fig:criterioslinealnobs}, the BIC\_HES and BIC\_HES\_SP criteria exhibit superior performance in identifying the true model, particularly for small sample sizes. This indicates that incorporating the Hessian matrix introduces a more appropriate penalization term in such conditions. Despite their popularity, AIC and AICc perform less effectively. As the sample size increases, all criteria tend to converge, suggesting that the choice of information criterion becomes less critical when sufficient data are available. Under small-sample conditions, BIC\_HES and BIC\_HES\_SP appear to provide the most reliable results.

Figure \ref{fig:BICvsBICDIO} presents a direct comparison between BIC and BIC\_HES across different sample sizes. The horizontal axis shows the BIC values, while the vertical axis corresponds to the BIC\_HES values. The strong linear association between these two criteria suggests that the inclusion of the logarithmic determinant of the Hessian matrix contributes meaningful information to the selection process.
\begin{figure}[htbp]
	\centering
		\includegraphics[width=0.60\textwidth]{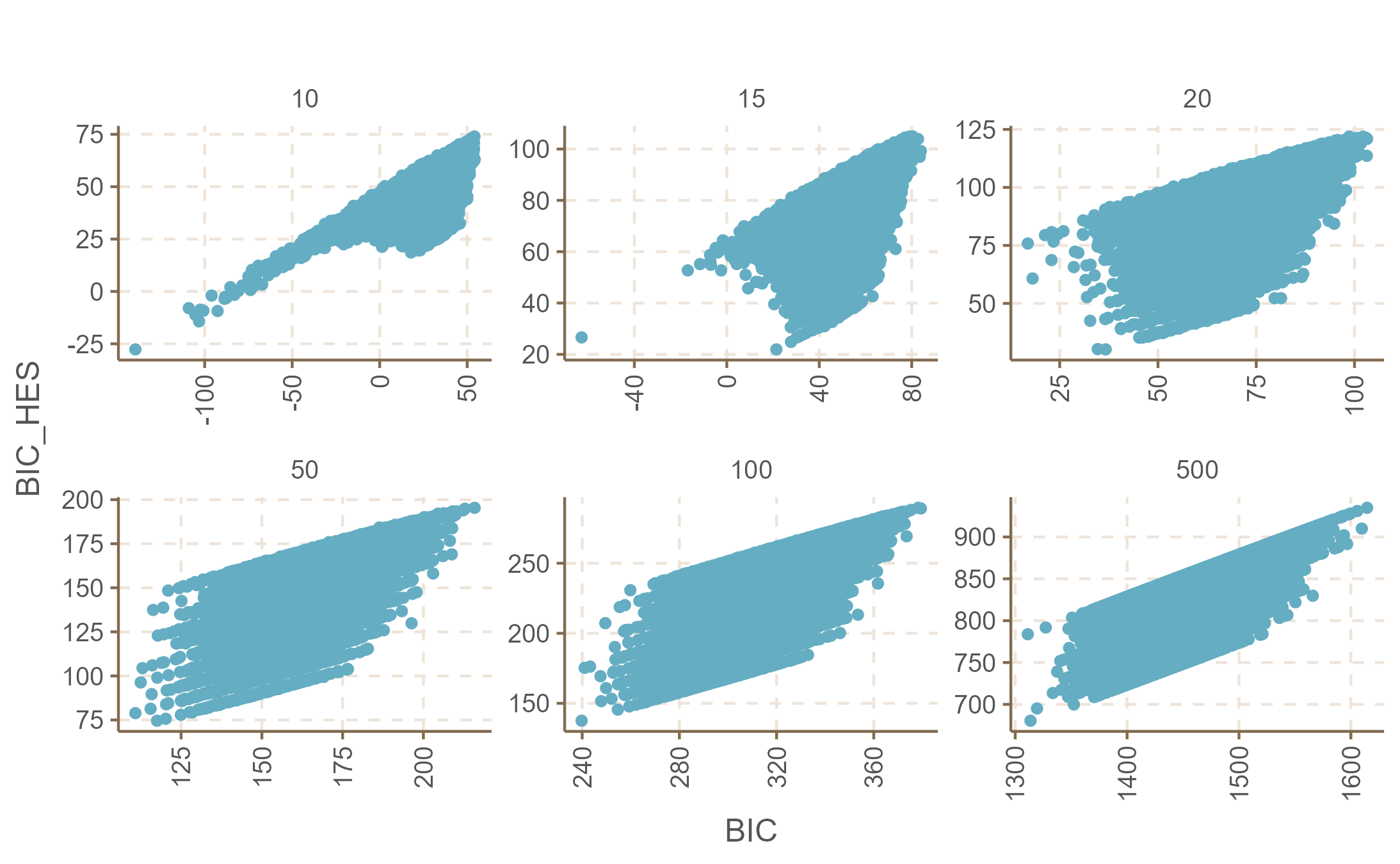}
\caption{Comparison between BIC\_HES y BIC.}	
 \label{fig:BICvsBICDIO}
	\end{figure}

A similar comparison is depicted in Figure \ref{fig:AICvsBICDIO}, replacing BIC with AIC. The observed relationship between BIC\_HES and AIC reinforces that the additional term in BIC\_HES assists in distinguishing between competing models.
\begin{figure}[htbp]
	\centering
		\includegraphics[width=0.60\textwidth]{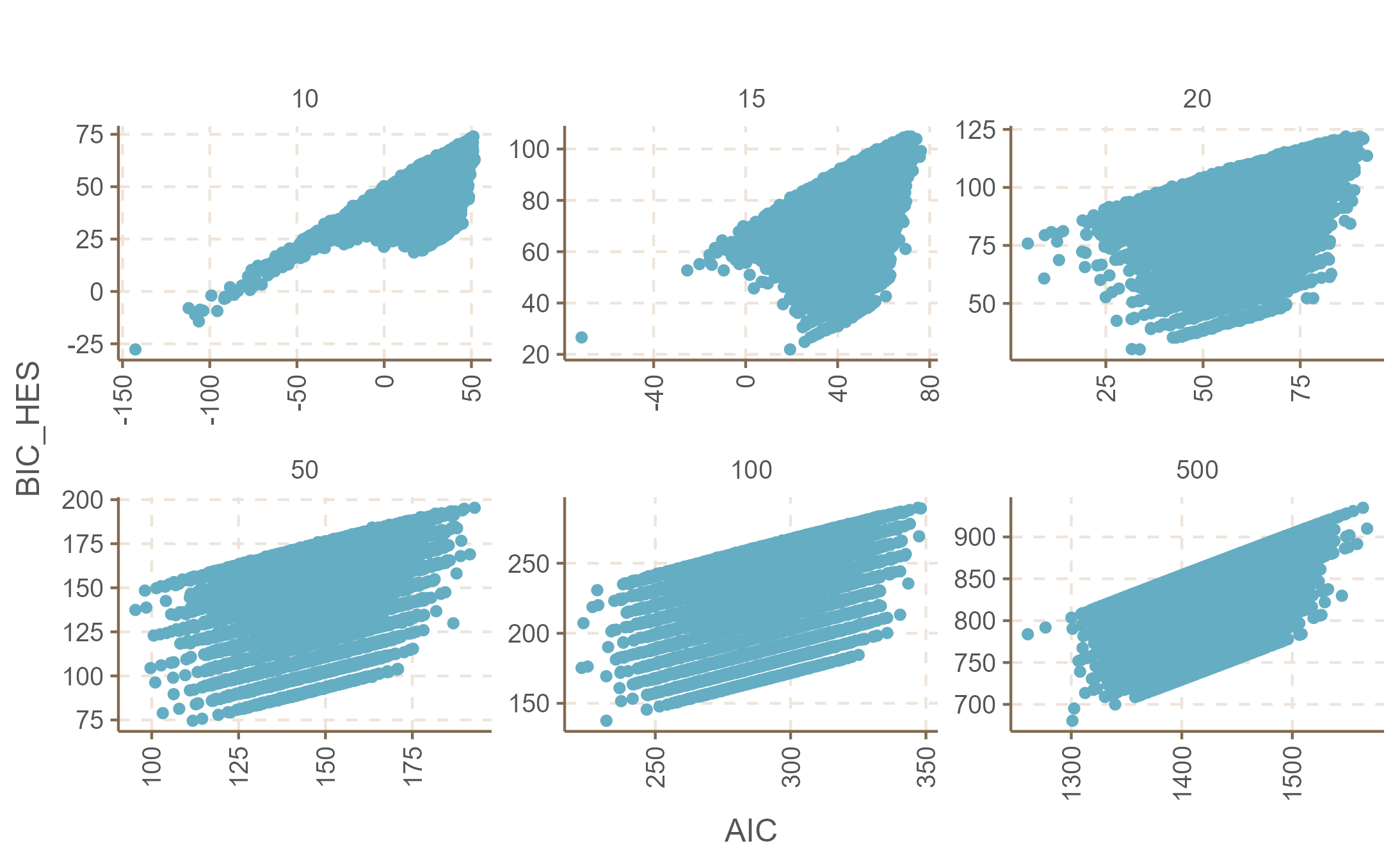}
	\caption{Comparison between BIC\_HES y AIC.}
 \label{fig:AICvsBICDIO}
\end{figure}

Figure \ref{fig:CRITERIOSVARCONT} compares the selection frequencies of the different criteria as a function of the number of added noise variables. BIC\_HES selects the true model more frequently than BIC when noise variables are introduced. However, when ten noise variables are included, both criteria yield nearly identical performance. Figure \ref{fig:CRITERIOSVARCONTNOBS} summarizes these results jointly by number of added noise variables and sample size. Once again, BIC\_HES demonstrates a higher probability of selecting the true data-generating model.

\begin{figure}[htbp]
	\centering
		\includegraphics[width=0.70\textwidth]{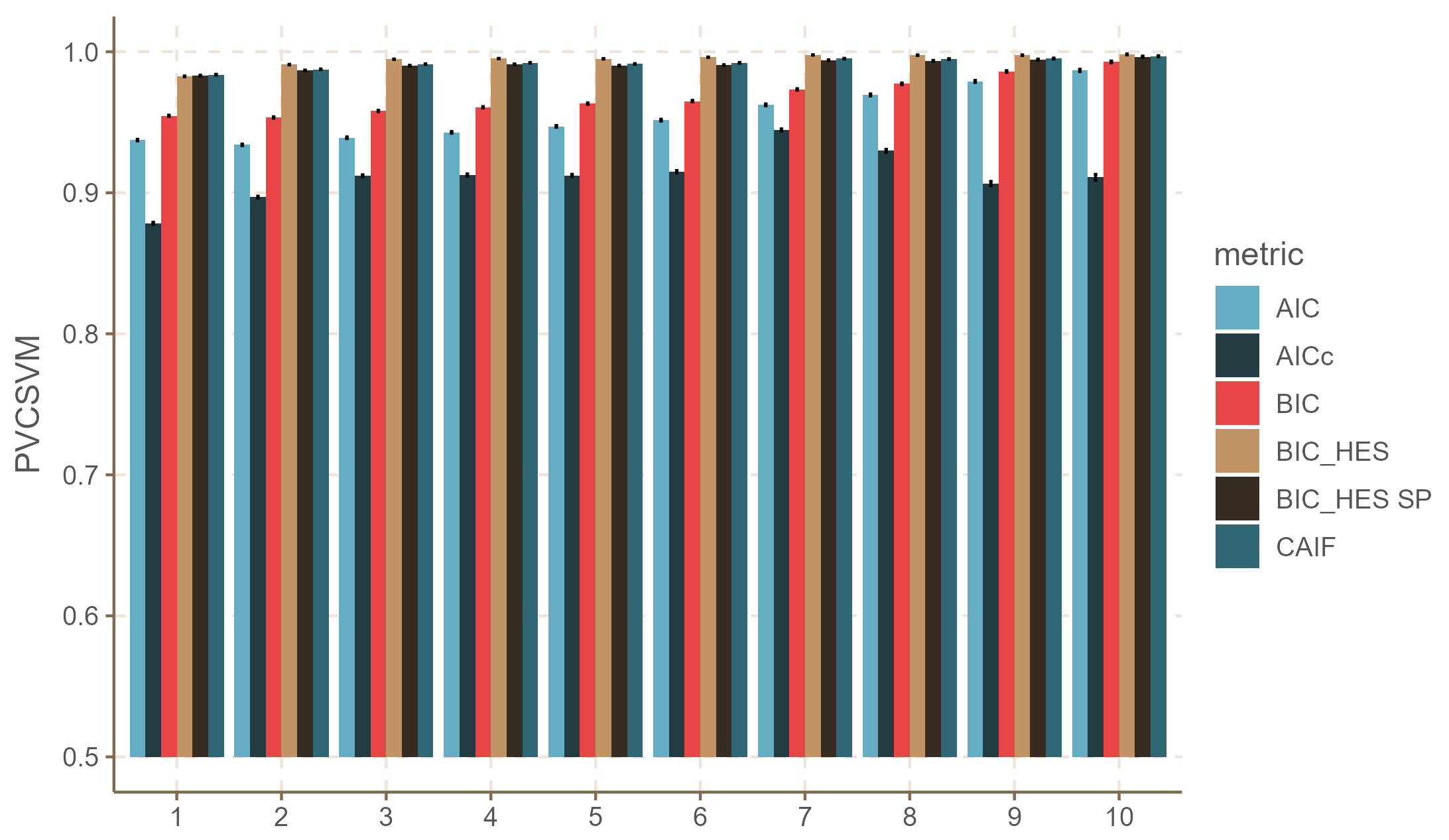}
		\caption{Information criteria vs number of noise variables}
		\label{fig:CRITERIOSVARCONT}
	\end{figure}
	
	\begin{figure}[h]
	\centering
		\includegraphics[width=1.0\textwidth]{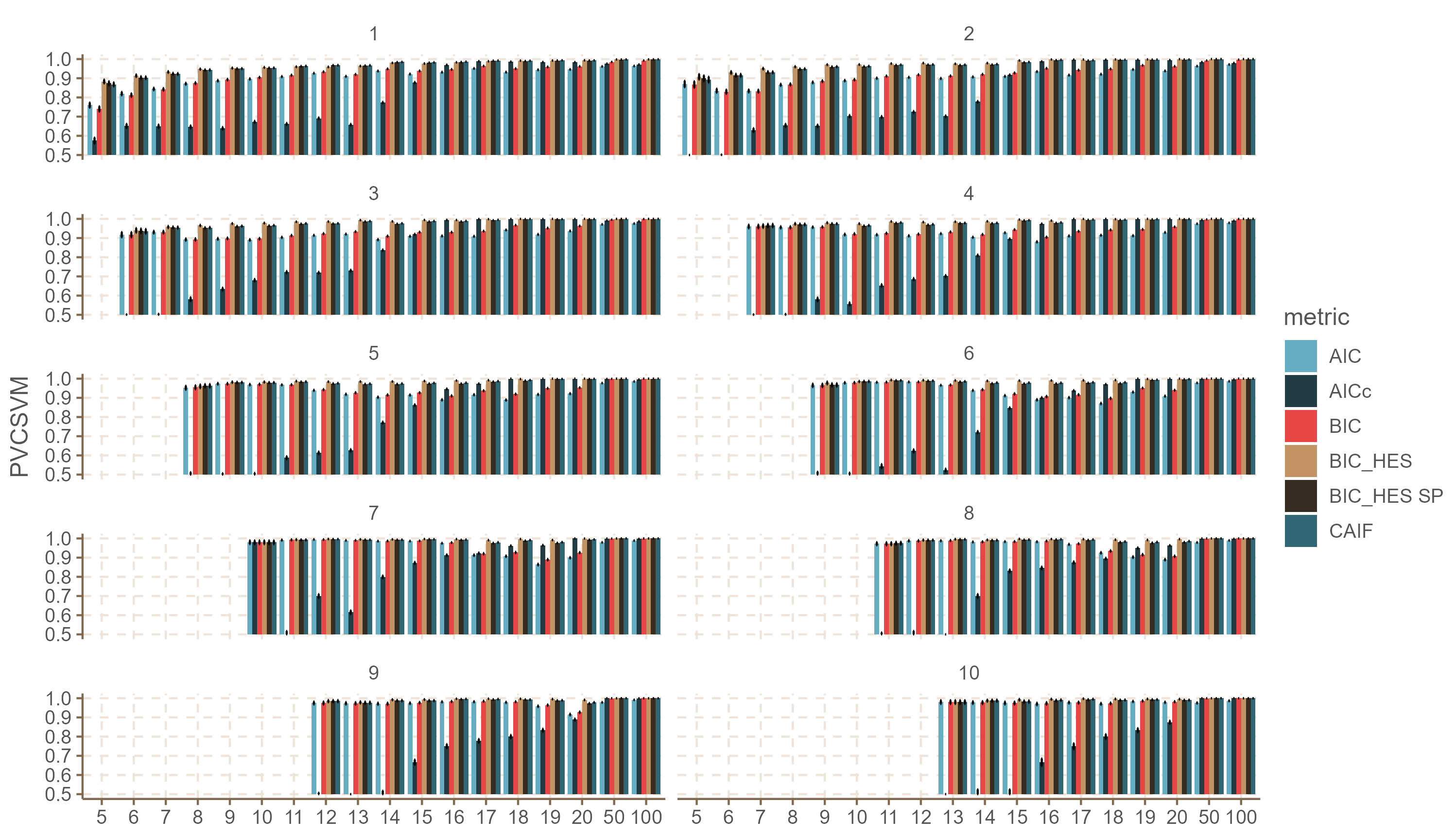}
		\caption{Information criteria by number of noise variables and sample size}
	\label{fig:CRITERIOSVARCONTNOBS}
\end{figure}
Overall, the results show a clear association between BIC and the logarithmic determinant of the Hessian matrix of the parameters. This component can thus be incorporated into the BIC expression to enhance model discrimination in multiple linear regression frameworks.

\subsection{Simulations for the Linear Mixed Model}\label{sec:mixedcase}

This section reports simulations for the linear mixed model, defined as:

\begin{equation}
\label{EQMIXTO}
y_{ij} = \beta_0 + \beta_1 X_{1ij} + \beta_2 X_{2ij} + \beta_3 X_{3ij} + u_{0i} + u_{1i} Z_{1ij} + u_{2i} Z_{2ij} + u_{3i} Z_{3ij} + \varepsilon_{ij}.
\end{equation}

where:

\begin{itemize}
\item $y_{ij}$ denotes the response for observation $j$ within group $i$;
\item $\beta_0$ is the fixed intercept;
\item $\beta_1, \beta_2, \beta_3$ are fixed-effect coefficients;
\item $u_{0i} \sim N(0, \sigma^2_{u0})$ represents the random intercept for group $i$;
\item $u_{1i}, u_{2i}, u_{3i} \sim N(0, \sigma^2_{bk})$ denote random-effect coefficients;
\item $\varepsilon_{ij} \sim N(0, \sigma^2_\varepsilon)$ represents the residual error term.
\end{itemize}

The simulation design was as follows:

\begin{enumerate}[label=\alph*)]
\item Generate regressors for fixed effects from a normal distribution with mean 2 and standard deviation 3.
\item Generate parameters for fixed effects from a normal distribution with mean 2 and standard deviation 3.
\item Generate global errors from a normal distribution with mean 0 and standard deviation 1.
\item Generate regressors for random effects from a normal distribution with mean 3 and standard deviation 3.
\item Generate parameters for random effects from a normal distribution with mean 3 and standard deviation 1.
\item Generate group-level random errors from a normal distribution with mean 0 and standard deviation 2.
\item Compute the response variables using Equation (\ref{EQMIXTO}) for sample sizes $n = 20, 50, 100, 500$ and group sizes $N = 5, 10, 20$.
\item Add noise regressors for fixed effects (mean 3, SD 3).
\item Add noise regressors for random effects (mean 3, SD 3).
\item Introduce the noise variables into the models, ensuring they do not exceed the number of original regressors.
\item Compute the information criteria (BIC, AIC, AICc, BIC\_HES, BIC\_HES\_SP, and CAIF).
\end{enumerate}

The simulations were executed in R \cite{rsoftware} using version 4.3.2 on a Linux system with 1 TB of RAM and 32 Intel Xeon 2.7 GHz cores. The \texttt{GLMMadaptive} package \cite{rizopoulos2023} was employed for model estimation. This package fits mixed-effects models for clustered data where integration over random effects cannot be solved analytically, using adaptive Gauss-Hermite quadrature. It also provides the Hessian matrix of the estimated parameters, which was essential for this analysis. A total of five replications were performed, generating 8,825 models.

Figure \ref{fig:criteriosVSN_obs} illustrates that BIC\_HES detects the true model more frequently than other criteria, reaching comparable performance at larger sample sizes. For small samples, BIC\_HES continues to outperform competing criteria. Figures \ref{fig:bicngrupos} to \ref{fig:noise_adicionadasNobs} present results by group size and by the number of added noise variables for both fixed and random effects. Across all scenarios, BIC\_HES consistently identifies the true data-generating model more often than the conventional BIC, particularly for small sample sizes or when noise variables are present. These findings underscore the robustness of the modified criterion in mixed-effects model frameworks.
\begin{figure}[htbp]
		\centering
		\includegraphics[width=0.90\textwidth]{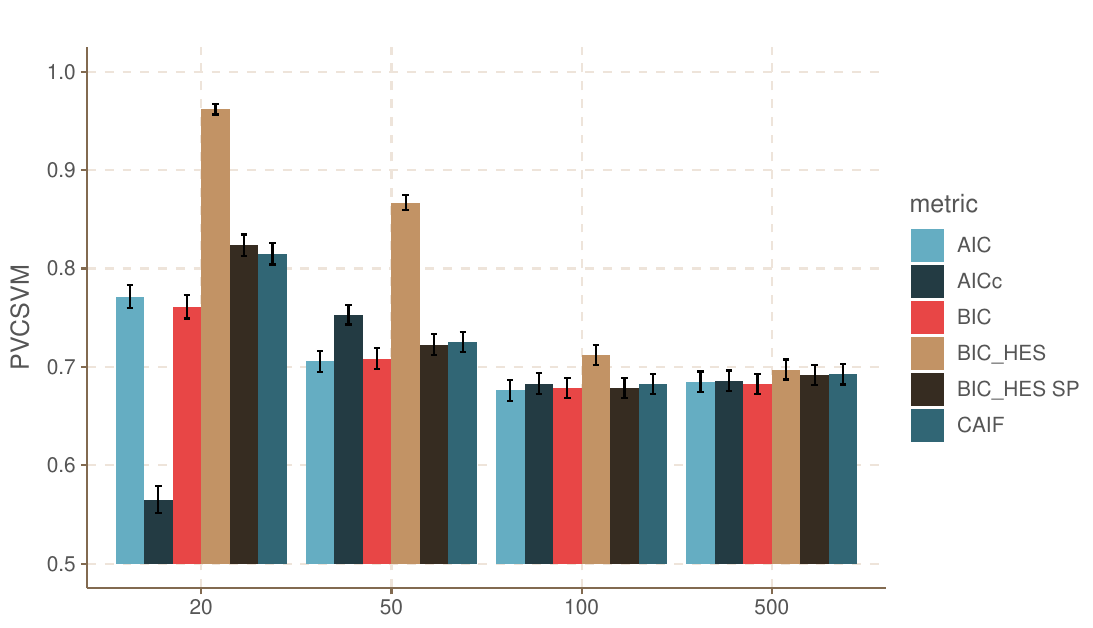}
	\caption{Information criteria vs sample size}
	\label{fig:criteriosVSN_obs}
\end{figure}
\begin{figure}[htbp]
    \centering
    \includegraphics[width=0.90\textwidth]{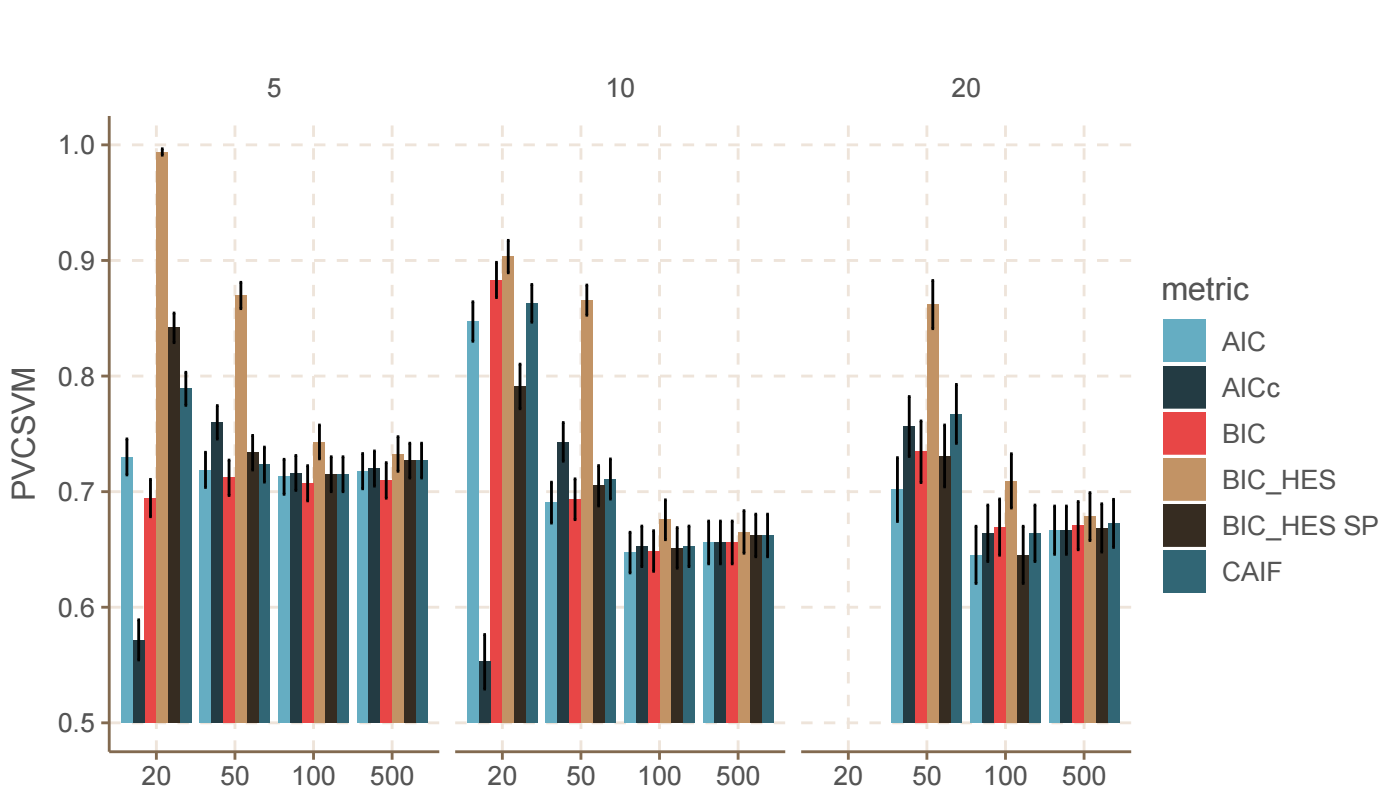}
    \caption{Information criteria by group size and sample size}
    \label{fig:bicngrupos}
\end{figure}
\begin{figure}[htbp]
    \centering
    \includegraphics[width=0.90\textwidth]{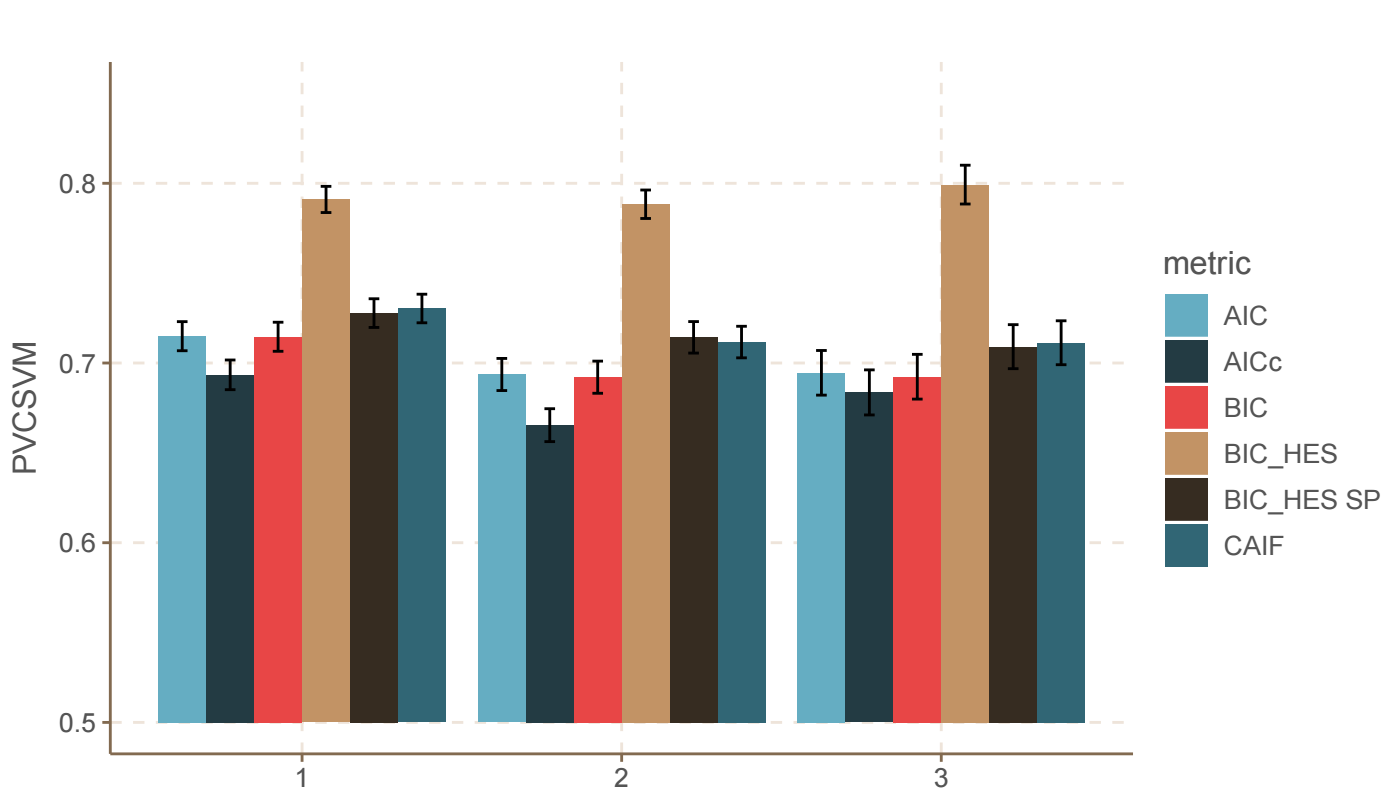}
    \caption{Information criteria by number of added fixed-effect variables}
    \label{fig:fixed_adicionadas}
\end{figure}
\begin{figure}[htbp]
    \centering
    \includegraphics[width=0.90\textwidth]{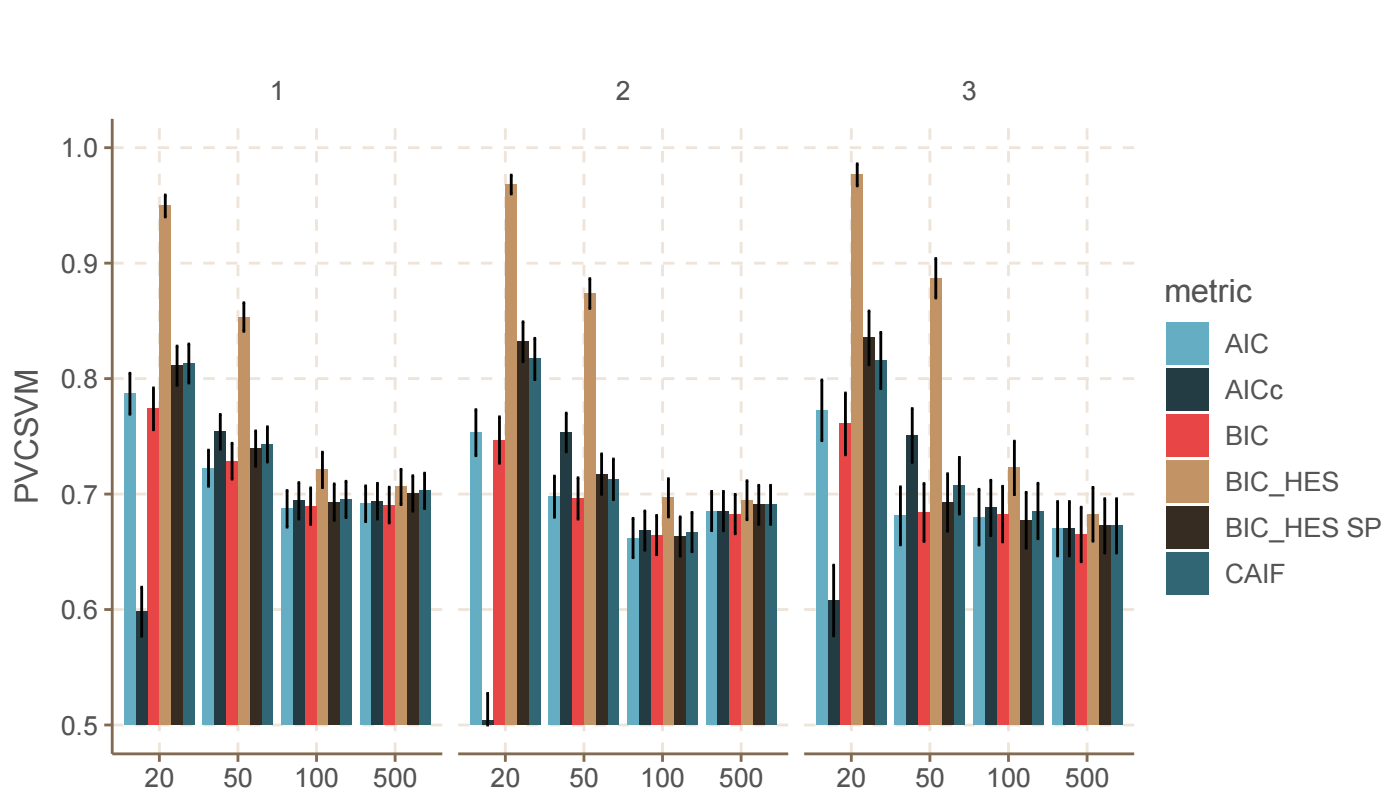}
    \caption{Information criteria by sample size and number of added fixed-effect variables.}
    \label{fig:fixed_adicionadas_Nobs}
\end{figure}
\begin{figure}[htbp]
    \centering
    \includegraphics[width=0.90\textwidth]{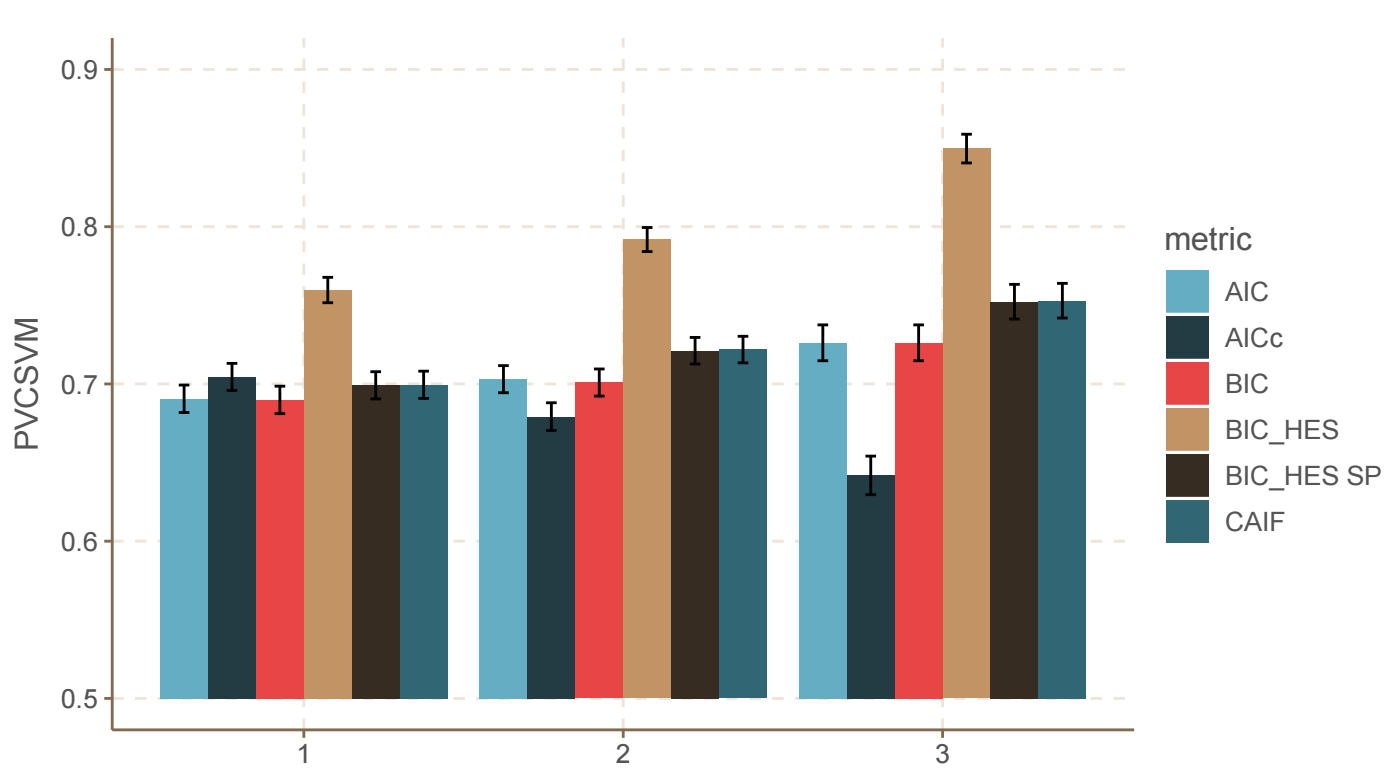}
    \caption{Information criteria by number of added random-effect noise variables.}
    \label{fig:noise_adicionadas}
\end{figure}

\begin{figure}[htbp]
    \centering
    \includegraphics[width=0.90\textwidth]{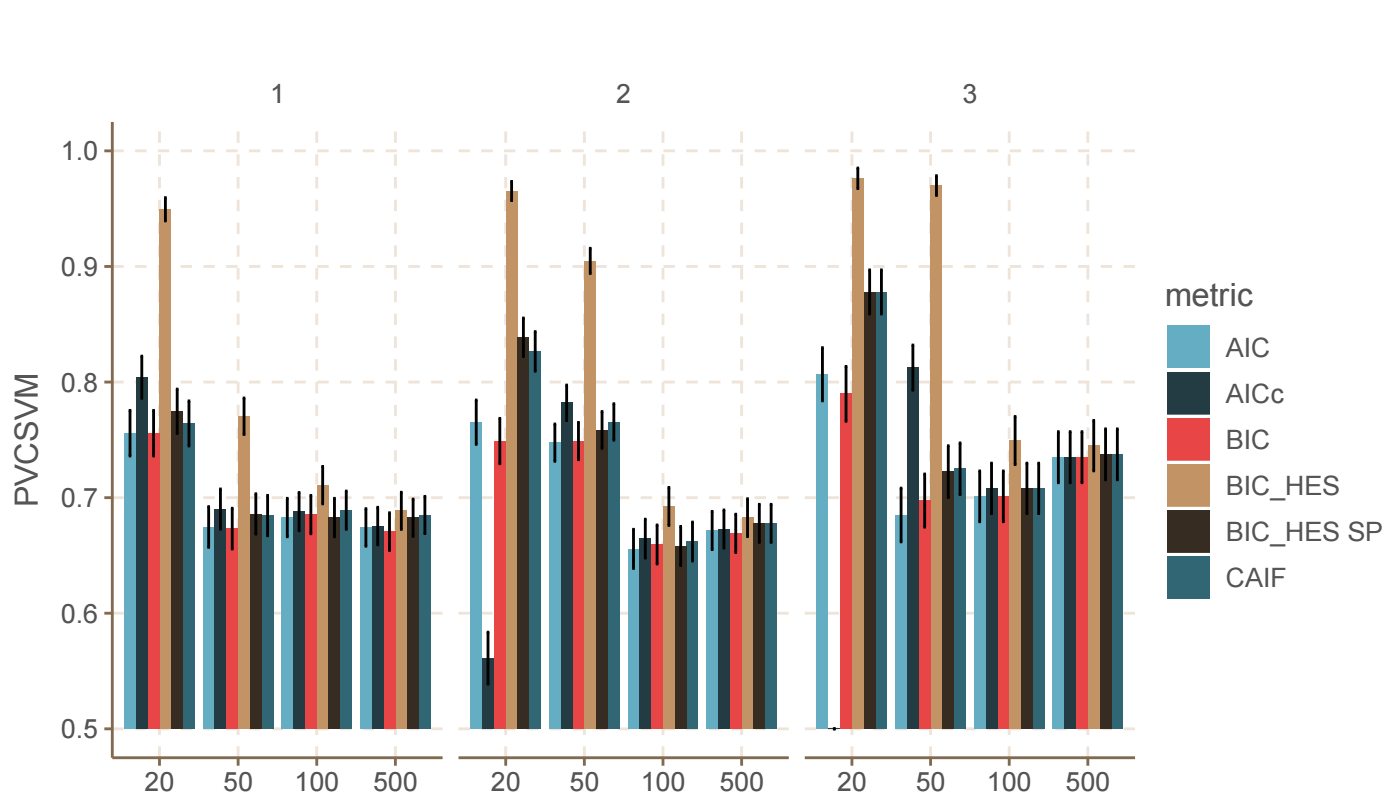}
    \caption{Information criteria by sample size and number of added random-effect noise variables}
    \label{fig:noise_adicionadasNobs}
\end{figure}

\subsection{Simulations for Binomial and Poisson Models}\label{sec:binomialpoisson}

This section presents simulations aimed at evaluating the effectiveness of several information criteria, including AIC, BIC, AICc, BIC\_HES, BIC\_HES\_SP, CAIF, and ICOMP. The procedure, based on \cite{debruine2021}, involved fitting three theoretical models: a generalized Gaussian mixed model, a logistic model, and a Poisson model. The parameters for each model were defined as follows:

\begin{itemize}
\item Number of subjects: 20, 50, 100;
\item Number of in-group stimuli: 2, 5, 10;
\item Number of out-group stimuli: 2, 5, 10;
\item Correlation between intercept and slope: $-0.7, -0.5, 0, 0.5, 0.7$.
\end{itemize}

Model specifications:
\begin{align*}
\text{Gaussian mixed model:} & \quad \beta_0 = 800,\; \beta_1 = 50, \\
\text{Logistic mixed model:} & \quad \beta_0 = 0.5,\; \beta_1 = 1, \\
\text{Poisson mixed model:} & \quad \beta_0 = 5,\; \beta_1 = 4.
\end{align*}

The simulation protocol was structured as follows:

\begin{enumerate}
\item \textbf{Data simulation:} Synthetic datasets were generated from predefined parameter combinations for each model type, ensuring a controlled environment for evaluating the selection criteria.
\item \textbf{Fitting the true model:} The theoretical models (Gaussian, logistic, Poisson) were fitted, and multiple criteria BIC, AIC, AICc, BIC\_HES, and BIC\_HES\_SP were applied.
\item \textbf{Fitting the misspecified model:} Alternative models were also fitted to examine whether the criteria could discriminate between true and incorrect specifications.
\item \textbf{Criterion comparison:} The values of BIC\_HES for the correct (M1) and incorrect (M2) models were compared to determine the criterion’s capacity to identify the correct model.
\item \textbf{Decision rule:}
\begin{itemize}
\item If ${\rm BIC}\_{\rm HES}_{M2} > {\rm BIC}\_{\rm HES}_{M1}$, the criterion fails to detect the true model.
\item If ${\rm BIC}\_{\rm HES}_{M2} \leq {\rm BIC}\_{\rm HES}_{M1}$, the criterion successfully identifies the true model.
\end{itemize}
\item \textbf{Repetition:} The process was repeated under multiple parameter combinations for all three model families to ensure robustness across different configurations.
\end{enumerate}

Simulations were performed using R version 4.3.2 \cite{rsoftware} on a Linux server with 512 GB of RAM and 16 Intel Xeon Gold 5320 processors at 2.2 GHz. The \texttt{GLMMadaptive} package \cite{rizopoulos2023} was employed, as it estimates mixed-effects models for grouped data and provides the Hessian matrix of the parameters used in this analysis. Twenty replications were conducted, generating and evaluating 124,273 models.

Figure \ref{fig:grafico_model_type} shows the proportion of times each criterion selected the true model. BIC, BIC\_HESS, and BIC\_HESS\_SP exhibit the highest effectiveness in identifying the true model in binomial and Poisson mixed models, owing to their appropriate complexity penalization and efficient handling of random effects. In contrast, AIC, AICc, CAIF, and ICOMP display more variable and frequently inferior performance, underscoring the need for criteria that properly balance goodness of fit and model complexity in these contexts.
\begin{figure}[htbp]
	\centering
		\includegraphics[width=0.70\textwidth]{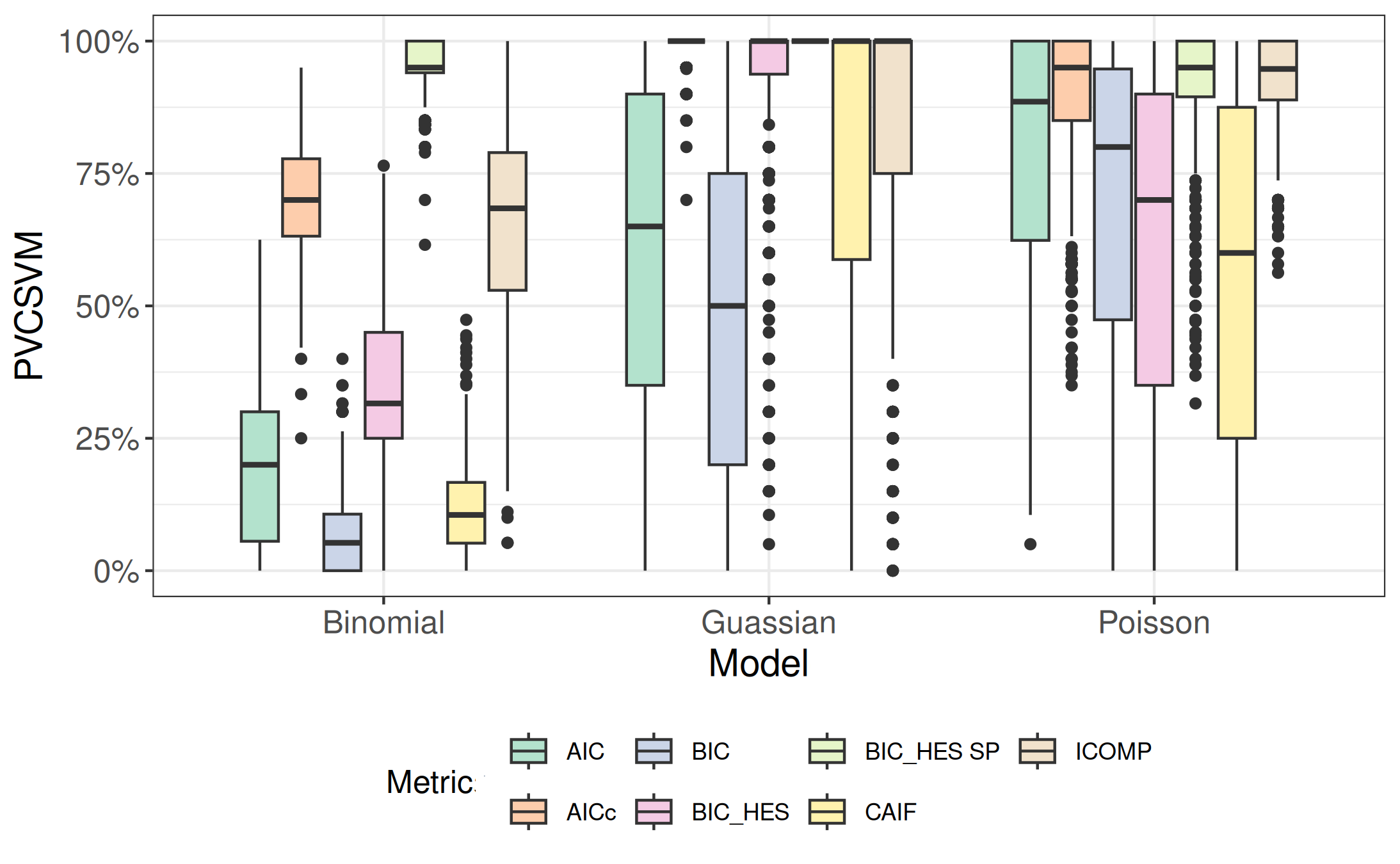}
				\caption{Information criteria for binomial and Poisson models}
    \label{fig:grafico_model_type}
\end{figure}

Figure \ref{fig:correlacion} indicates that BIC and BIC\_HESS are the most robust criteria across Gaussian, binomial, and Poisson mixed-model simulations, regardless of the correlation between random effects. Although AIC and AICc perform competitively under certain settings, they are more sensitive to extreme correlation values ($\rho$). CAIF and ICOMP yield less consistent results, suggesting limited suitability when correlations among random effects are pronounced. These findings emphasize the importance of adopting information criteria capable of effectively addressing the additional complexity introduced by correlated random effects in mixed-effects modeling.
\begin{figure}[htbp]
	\centering
		\includegraphics[width=0.70\textwidth]{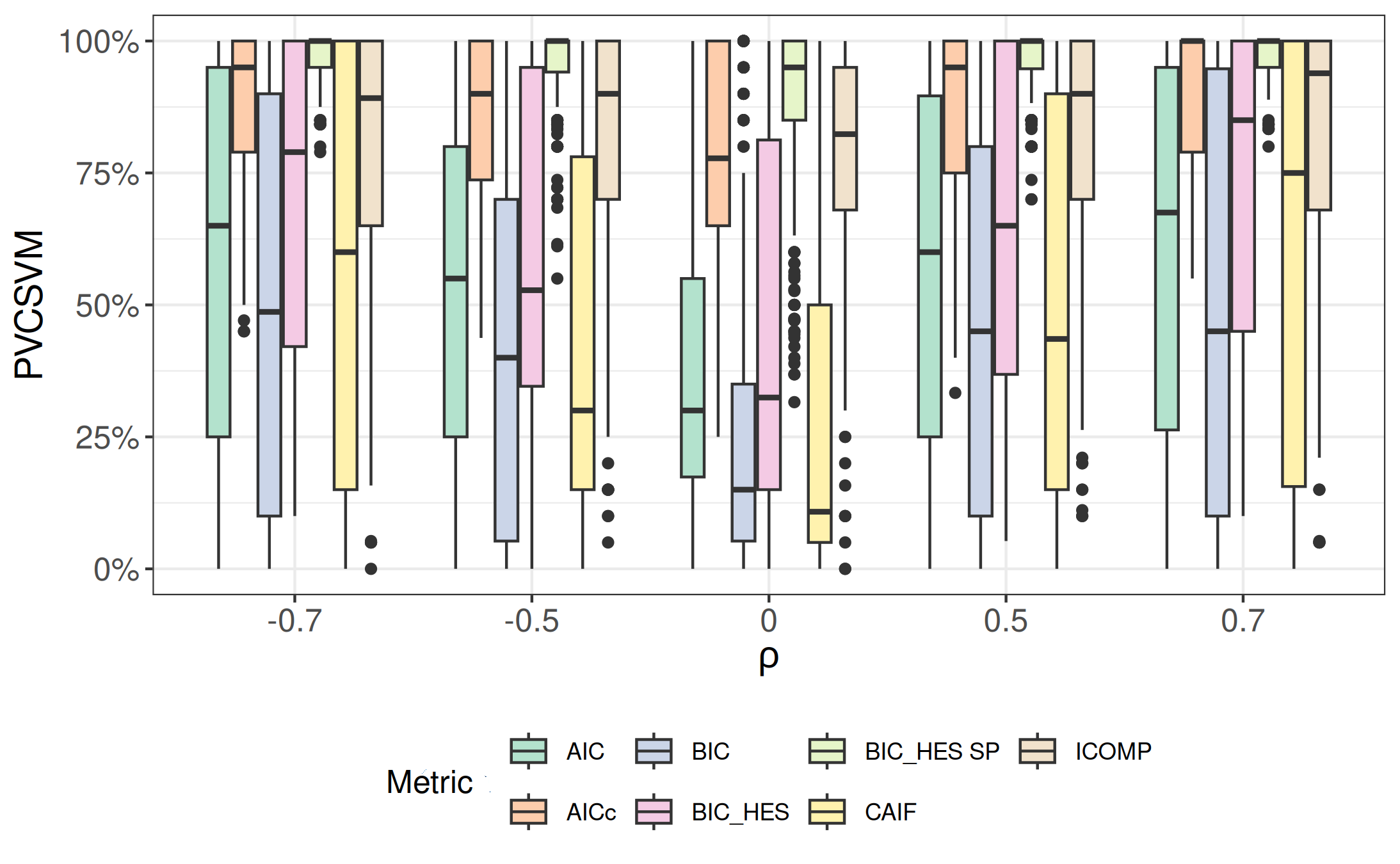}
				\caption{Selection percentage by correlations among random effects}	
    \label{fig:correlacion}
\end{figure}

\section{Proofs}\label{sec:proofs}
In this section, we provide the proofs of Corollary \ref{1cor}, and  Theorem \ref{betterthm}.

 \begin{proof}[Proof of Corollary \ref{1cor}]

 \begin{align*}
     \P(y\vert H_{i})&=\int_{\Theta_{i}}\P(y\vert H_{i},\theta)g_{i}(\theta)d\theta\\
     &=\int_{\Theta_{i}}f_{i}(y\vert \theta)g_{i}(\theta)d\theta\\
     &=\int_{\Theta_{i}}{\rm exp}\left( \log \left(f_{i}(y\vert \theta)g_{i}(\theta)\right)\right)d\theta
 \end{align*}
 We can now expand $\log \left(f_{i}(y\vert \theta)g_{i}(\theta)\right)$ about $\tilde{\theta}$.  Thus, we  can make the following second order expansion about  $\theta=\tilde{\theta}_{i}$ as

\begin{align*}
    \log \left(f_{i}(y\vert \theta)g_{i}(\theta)\right)= \log \left(f_{i}(y\vert \tilde{\theta}_{i})g_{i}(\tilde{\theta}_{i})\right) &+ (\theta-\theta_{i})\cdot\nabla Q(\tilde{\theta}_{i})+\\
    &+ \frac{1}{2}(\theta-\theta_{i})\nabla^{2}Q(\tilde{\theta}_{i})\cdot(\theta-\theta_{i})^{t}
\end{align*}
where $\nabla Q$ and $\nabla^{2}$ are gradient and Hessian matrix of $Q$ at $\tilde{\theta}_{i}$, respectively, and $Q=\log \left(f_{i}(y\vert \theta)g_{i}(\theta)\right)$.
Therefore,

\begin{align*}
    \P(y\vert H_{i})&=\int_{\Theta_{i}}{\rm exp}\left(Q(\tilde{\theta}_{i}) + (\theta-\theta_{i})\cdot\nabla Q(\tilde{\theta}_{i}) +  \frac{1}{2}(\theta-\theta_{i})\nabla^{2}Q(\tilde{\theta}_{i})\cdot(\theta-\theta_{i})^{t}\right)d\theta\\
    &=\int_{\Theta_{i}}{\rm exp}\left(Q(\tilde{\theta}_{i}) +  \frac{1}{2}(\theta-\theta_{i})\nabla^{2}Q(\tilde{\theta}_{i})\cdot(\theta-\theta_{i})^{t}\right)d\theta
\end{align*}
where in the last line we used that $\tilde{\theta}_{i}$ is stationary point for $Q$. Hence,

\begin{align*}
    \P(y\vert H_{i})&={\rm exp}(Q(\tilde{\theta}_{i}))\int_{\Theta_{i}}{\rm exp}\left(\frac{1}{2}x\nabla^{2}Q(\tilde{\theta}_{i})\cdot x^{t}\right)dx.
\end{align*}
Let us now introduce
\begin{align}\label{matrix}
    P_{i}=-\nabla^{2}Q(\tilde{\theta}_{i}).
\end{align}
We write now
\begin{align*}
     \P(y\vert H_{i})={\rm exp}(Q(\tilde{\theta}_{i}))\int_{\Theta_{i}}{\rm exp}\left(-\frac{1}{2}xP_{i}\cdot x^{t}\right)dx.
\end{align*}
Since $\tilde{\theta}_{i}$ is a point of maximum for $Q$, then $P_{i}$ is positive definite and we may apply a decomposition in singular values. That is, 
\begin{align*}
    P_{i}=U\Sigma_{i}U^{t}
\end{align*}
where $U$ is an orthogonal matrix. So that,
\begin{align*}
     \P(y\vert H_{i})={\rm exp}(Q(\tilde{\theta}_{i}))\int_{U\Theta_{i}}{\rm exp}\left(-\frac{1}{2}y\Sigma_{i}\cdot y^{t}\right)dy\\
     ={\rm exp}(Q(\tilde{\theta}_{i}))\int_{\Theta_{i}}{\rm exp}\left(-\frac{1}{2}y\Sigma_{i}\cdot y^{t}\right)dy.
\end{align*}
From here, we finally obtain

\begin{align*}
    \P(y\vert H_{i})&={\rm exp}(Q(\tilde{\theta}_{i}))\Pi_{j=1}^{p}\sqrt{\frac{2\pi}{\lambda_{j}}}\\
    &=f_{i}(y\vert \tilde{\theta}_{i})g_{i}(\tilde{\theta}_{i})\Pi_{j=1}^{p}\sqrt{\frac{2\pi}{\lambda_{j}}}\\
    &=f_{i}(y\vert \tilde{\theta}_{i})g_{i}(\tilde{\theta}_{i})\frac{(2\pi)^{p/2}}{\det(P_{i})^{1/2}}.
\end{align*}
where $(\lambda_{j})$ are the eigenvalues of $P_{i}$.
 \end{proof}
 
\begin{proof}[Proof of Theorem \ref{betterthm}]
 Formula \eqref{logbif} is a simply consequence of Corollary \ref{1cor} while the last formula \eqref{betterbic} requires only a few easy  computations. Indeed, for simplcity, let us write 
 \begin{align*}
     \log(L_{i})=\log(f_{i}(y\vert \tilde{\theta}_{i})g_{i}(\tilde{\theta}_{i})), \quad \text{for $i=0,1.$}
 \end{align*}
 Since $\vert \theta_{1}\vert=\vert \theta_{0}\vert$ we immediately have
\begin{align*}
     2\log(\B_{0,1})=&2 \log(L_{0})-2\log(L_{1})+\vert\theta_{1}\vert\log(m)- \vert\theta_{0}\vert\log(m) \\
     &+ \log(\det(P_{1}))-\log(\det(P_{0}))
\end{align*}
where $m$ is the  number of observations in $\Theta_{i}$ for $i=0,1$. From here, we have by definition of $\nbic$ that
\begin{align*}
    2\log(\B_{0,1}) &= \nbic(H_{1})-\nbic(H_{0})\\
    &=\nbic(\Delta_{1,0}).
\end{align*}
\end{proof}
Lastly, let us provide the proof of \autoref{thm:consistency}. 

\begin{proof}
Recall that a model selection criterion $IC$ is consistent if, given a true data-generating model $M_0$ (with $p_0$ parameters) and an overfitted competing model $M_1$ (with $p_1$ parameters, $p_1 > p_0$ such that $M_0 \subset M_1$), the probability of selecting the true model approaches one as the sample size increases:

\begin{align*}
\lim_{n \rightarrow \infty} \mathbb{P}(IC(M_0) < IC(M_1)) = 1
\end{align*}
To prove consistency, we must show that the difference $\Delta \text{BIC\_HES} = \text{BIC\_HES}(M_1) - \text{BIC\_HES}(M_0)$ is positive as $n \rightarrow \infty$. We consider the difference in the criterion values between the overfitted model $M_1$ and the true model $M_0$:

\begin{align*}
\Delta \text{BIC\_HES} = \text{BIC\_HES}(M_1) - \text{BIC\_HES}(M_0).
\end{align*}
Substituting the definition of $\text{BIC\_HES}$:
\begin{align*}
\Delta \text{BIC\_HES} &= \left[-2 \log L_1 + p_1 \log n + \log|J_1|\right] - \left[-2 \log L_0 + p_0 \log n + \log|J_0|\right] \\
&= \underbrace{-\left[2(\log L_1 - \log L_0)\right]}_{\text{Likelihood Ratio Term ($\Lambda$)}} + \underbrace{(p_1 - p_0) \log n}_{\text{Penalty Term}} + \underbrace{(\log |J_1| - \log |J_0|)}_{\text{Hessian Difference Term ($\Delta H$)}}
\end{align*}
We analyze the asymptotic order of each component as $n \rightarrow \infty$. The likelihood ratio statistic is defined as
\begin{align*}
\Lambda = 2(\log L_1 - \log L_0).
\end{align*}
Under the Laplace approximation, write the parameters of the larger model as 
\(\theta_1 = (\theta_0, \eta)\), where \(\eta \in \mathbb{R}^{p_1 - p_0}\) are the extra parameters. 
A second-order Taylor expansion of the log-likelihood around \((\theta_0, 0)\) gives

\begin{align*}
\log L_1(\hat{\theta}_1) 
&= \log L_1(\theta_0, 0) 
+ (\hat{\eta})^\top \frac{\partial \log L_1}{\partial \eta}\Big|_{(\theta_0,0)} 
+ \frac{1}{2} \hat{\eta}^\top J_{\eta\eta} \hat{\eta} 
+ o_p(\|\hat{\eta}\|_{2}^2),
\end{align*}
where \(J_{\eta\eta}\) is the Hessian block corresponding to the extra parameters \(\eta\) and the remainder term satisfies \(o_p(\|\hat{\eta}\|_{2}^2) \to 0\) in probability faster than \(\|\hat{\eta}\|_{2}^2\), and \(\hat{\eta}\) are the MLEs of the extra parameters. Since \(\theta_0\) maximizes the likelihood of the true model \(M_0\), the score with respect to the original parameters vanishes:
\begin{align*}
\frac{\partial \log L_1}{\partial \theta}\Big|_{(\theta_0,0)} = 0.
\end{align*}
Partitioning the Hessian and the parameter difference according to \(\theta_0\) and \(\eta\), the quadratic term reduces asymptotically to

\begin{align*}
\Lambda = 2(\log L_1 - \log L_0) = \hat{\eta}^\top J_{\eta\eta} \hat{\eta} + + o_p(\|\hat{\eta}\|_{2}^2),
\end{align*}
where \(J_{\eta\eta} = - \frac{\partial^2 \log L_1}{\partial \eta \, \partial \eta^\top}\big|_{(\theta_0,0)}\) is the Hessian block corresponding to the extra parameters. Notice that

\begin{align*}
\sqrt{n} \hat{\eta} \xrightarrow{d} \mathcal{N}(0, I_{\eta\eta}^{-1}),
\end{align*}
where

\begin{align*}
I_{\eta\eta} = \mathbb{E}\Big[-\frac{\partial^2 \log L_1}{\partial \eta \, \partial \eta^\top}\big|_{(\theta_0,0)}\Big]
\end{align*}
is the Fisher information matrix for the extra parameters. Therefore, the likelihood ratio statistic satisfies

\begin{align*}
\Lambda = 2(\log L_1 - \log L_0) 
= \hat{\eta}^\top J_{\eta\eta} \hat{\eta} + o_p(\|\hat{\eta}\|_{2}^2) 
\xrightarrow{d} \chi^2_{p_1 - p_0}.
\end{align*}
Now,  recall that a sequence of random variables \(X_n\) is called stochastically bounded, written \(X_n = O_p(1)\), if for every \(\varepsilon > 0\) there exists a constant \(M > 0\) such that

\begin{align*}
\sup_n \mathbb{P}(|X_n| > M) < \varepsilon.
\end{align*}
Since \(\Lambda\) converges in distribution to a chi-squared variable, which is  finite with probability 1, we can choose \(M\) large enough that \(\mathbb{P}(\Lambda > M) < \varepsilon\) for all sufficiently large \(n\). Therefore, by definition,
\begin{align*}
\Lambda = O_p(1).
\end{align*}
The complexity penalty term is $(p_1 - p_0) \log n$. Since $M_1$ is an overfitted model, we have $p_1 > p_0$. Then

\begin{align*}
\lim_{n \rightarrow \infty} (p_1 - p_0) \log n = \infty.
\end{align*}
This term is of the order $O(\log n)$.

The Hessian difference term is $\Delta H = \log |J_1| - \log |J_0|$. Based on the Laplace expansion of the marginal likelihood, the term $\log |J|$ captures parameter dependency and is $O_p(1)$ with respect to $n$. So

\begin{align*}
\Delta H = O_p(1).
\end{align*}
Substituting the asymptotic orders back into the expression for $\Delta \text{BIC\_HES}$:
\begin{align*}
\Delta \text{BIC\_HES} = \underbrace{- O_p(1)}_{\text{Likelihood Gain}} + \underbrace{O(\log n)}_{\text{Complexity Penalty}} + \underbrace{O_p(1)}_{\text{Hessian Difference}}
\end{align*}
As $n \rightarrow \infty$, the term $O(\log n)$ dominates the $O_p(1)$ terms.  Note that, 

\begin{align*}
\Delta \text{BIC\_HES} \approx (p_1 - p_0) \log n + O_p(1).
\end{align*}
Since $p_1 - p_0 > 0$, the dominant term ensures that $\Delta \text{BIC\_HES}$ tends to positive infinity.

Therefore, the probability that the criterion of the larger model $M_1$ is greater than the criterion of the true model $M_0$ approaches 1. That is,

\begin{align*}
\lim_{n \rightarrow \infty} \mathbb{P}(\Delta \text{BIC\_HES} > 0) = 1.
\end{align*}
\end{proof}

\section{Conclusions}\label{sec:futureworks}
In this paper, we proposed a perturbated Bayesian Information Criterion (BIC$\_$HES) that enhances traditional model selection by incorporating the log-determinant of the Hessian matrix of the log-likelihood function. This modification introduces a data-dependent complexity penalty that offers a more adaptive and geometrically informed approach to model comparison. Through extensive simulations across linear, mixed, binomial, and Poisson models, we demonstrated that BIC$\_$HES consistently outperforms classical criteria such as BIC, AIC, and AICc, particularly in small-sample scenarios and in the presence of noise variables. The criterion shows strong robustness and improved accuracy in identifying the true data-generating model. From a practical standpoint, BIC$\_$HES is compatible with standard statistical software and can be implemented using posterior draws or observed Fisher information matrices. Future work may explore extensions to high-dimensional settings, regularized Bayesian frameworks, and nonparametric models. Overall, BIC$\_$HES represents a promising advancement in Bayesian model selection, bridging theoretical rigor with practical applicability.

\section*{Acknowledgements}
AMH has been supported by project PRIN 2022 
``understanding the LEarning process of QUantum Neural networks (LeQun)'', proposal code 2022WHZ5XH -- CUP J53D23003890006. The author AMH is a member of the ``Gruppo Nazionale per l'Analisi Matematica, la Probabilità e le loro Applicazioni (GNAMPA)'' of the ``Istituto Nazionale di Alta Matematica ``Francesco Severi'' (INdAM)''.

\section*{Compliance with Ethical Standards}
\begin{itemize}
    \item Disclosure of potential conflicts of interest: The authors confirm that there are no conflicts of interest.
    \item Research involving Human Participants and/or Animals: The authors confirm that our research paper does not involve human participants and/or animals as a dataset.
\end{itemize}

\bibliographystyle{siam}
\bibliography{biblio}

@book{amari2016information,
  title={Information geometry and its applications},
  author={Amari, Shun-ichi},
  volume={194},
  year={2016},
  publisher={Springer}
}

@article{gelman2006prior,
  title={Prior distributions for variance parameters in hierarchical models (comment on article by Browne and Draper)},
  author={Gelman, Andrew},
  year={2006}
}

@article{perez2017scaled,
  title={The scaled beta2 distribution as a robust prior for scales},
  author={P{\'e}rez, Maria-Egl{\'e}e and Pericchi, Luis Ra{\'u}l and Ram{\'\i}rez, Isabel Cristina},
  year={2017}
}

@article{gelman2014understanding,
  title={Understanding predictive information criteria for Bayesian models},
  author={Gelman, Andrew and Hwang, Jessica and Vehtari, Aki},
  journal={Statistics and computing},
  volume={24},
  number={6},
  pages={997--1016},
  year={2014},
  publisher={Springer}
}

@article{bozdogan1987model,
  title={Model selection and Akaike's information criterion (AIC): The general theory and its analytical extensions},
  author={Bozdogan, Hamparsum},
  journal={Psychometrika},
  volume={52},
  number={3},
  pages={345--370},
  year={1987},
  publisher={Springer-Verlag}
}

@article{spiegelhalter2014deviance,
  title={The deviance information criterion: 12 years on},
  author={Spiegelhalter, David J and Best, Nicola G and Carlin, Bradley P and Linde, Angelika},
  journal={Journal of the Royal Statistical Society Series B: Statistical Methodology},
  volume={76},
  number={3},
  pages={485--493},
  year={2014},
  publisher={Oxford University Press}
}

@article{held2014applied,
  title={Applied statistical inference},
  author={Held, Leonhard and Bov{\'e}, D Saban{\'e}s},
  journal={Springer, Berlin Heidelberg, doi},
  volume={10},
  number={978-3},
  pages={16},
  year={2014},
  publisher={Springer}
}

@article{faulk2018,
  title={Computing Bayes factors to measure evidence from experiments: An extension of the BIC approximation},
  author={Faulkenberry, Thomas J},
  journal={arXiv preprint arXiv:1803.00360},
  year={2018}
}

@article{raftery1995,
  title={Bayesian model selection in social research},
  author={Raftery, Adrian E},
  journal={Sociological methodology},
  pages={111--163},
  year={1995},
  publisher={JSTOR}
}

@article{vehtari2002,
  title={Bayesian model assessment and comparison using cross-validation predictive densities},
  author={Vehtari, Aki and Lampinen, Jouko},
  journal={Neural computation},
  volume={14},
  number={10},
  pages={2439--2468},
  year={2002},
  publisher={MIT Press}
}

@incollection{akaike1998,
  title={Information theory and an extension of the maximum likelihood principle},
  author={Akaike, Hirotogu},
  booktitle={Selected papers of hirotugu akaike},
  pages={199--213},
  year={1998},
  publisher={Springer}
}

@article{spiegelhalter1996,
  title={BUGS 0.5: Bayesian inference using Gibbs sampling manual (version ii)},
  author={Spiegelhalter, David and Thomas, Andrew and Best, Nicky and Gilks, Wally},
  journal={MRC Biostatistics Unit, Institute of Public Health, Cambridge, UK},
  pages={1--59},
  year={1996}
}

@article{watanabe2010,
  title={Asymptotic equivalence of Bayes cross validation and widely applicable information criterion in singular learning theory.},
  author={Watanabe, Sumio and Opper, Manfred},
  journal={Journal of machine learning research},
  volume={11},
  number={12},
  year={2010}
}

@article{van2021bayes,
  title={Bayes factors for mixed models},
  author={van Doorn, Johnny and Aust, Frederik and Haaf, Julia M and Stefan, Angelika M and Wagenmakers, Eric-Jan},
  journal={Computational Brain \& Behavior},
  pages={1--13},
  year={2021},
  publisher={Springer}
}

@article{vehtari2017,
  title={Practical Bayesian model evaluation using leave-one-out cross-validation and WAIC},
  author={Vehtari, Aki and Gelman, Andrew and Gabry, Jonah},
  journal={Statistics and computing},
  volume={27},
  number={5},
  pages={1413--1432},
  year={2017},
  publisher={Springer}
}

@article{gelman2021,
  title={What are the most important statistical ideas of the past 50 years?},
  author={Gelman, Andrew and Vehtari, Aki},
  journal={Journal of the American Statistical Association},
  volume={116},
  number={536},
  pages={2087--2097},
  year={2021},
  publisher={Taylor \& Francis}
}

@article{l1990,
  title={Nonlinear mixed effects models for repeated measures data},
  author={Lindstrom, Mary J and Bates, Douglas M},
  journal={Biometrics},
  pages={673--687},
  year={1990},
  publisher={JSTOR}
}

@article{rouder2012,
  title={Default Bayes factors for ANOVA designs},
  author={Rouder, Jeffrey N and Morey, Richard D and Speckman, Paul L and Province, Jordan M},
  journal={Journal of mathematical psychology},
  volume={56},
  number={5},
  pages={356--374},
  year={2012},
  publisher={Elsevier}
}

@article{gabry2020,
  title={Bayesian applied regression modeling via Stan},
  author={Gabry, J and Goodrich, B},
  journal={Package “rstanarm},
  year={2020}
}

@article{piironen2017,
  title={Comparison of Bayesian predictive methods for model selection},
  author={Piironen, Juho and Vehtari, Aki},
  journal={Statistics and Computing},
  volume={27},
  number={3},
  pages={711--735},
  year={2017},
  publisher={Springer}
}

@article{wagenmakers2007,
  title={A practical solution to the pervasive problems ofp values},
  author={Wagenmakers, Eric-Jan},
  journal={Psychonomic bulletin \& review},
  volume={14},
  number={5},
  pages={779--804},
  year={2007},
  publisher={Springer}
}

@book{pawitan2001,
  title={In all likelihood: statistical modelling and inference using likelihood},
  author={Pawitan, Yudi},
  year={2001},
  publisher={Oxford University Press}
}

@article{wallas,
  title={Study of Shannon entropy in the context of quantum mechanics: An application to free and confined harmonic oscillator},
  author={Wallas S. Nascimento, Frederico V. Prudente},
  journal={Química Nova},
  volume={39},
  number={6},
  pages={757--764},
  year={2016},
  publisher={Scielo Brasil}
}

@Manual{rsoftware,
    title = {R: A Language and Environment for Statistical Computing},
    author = {{R Core Team}},
    organization = {R Foundation for Statistical Computing},
    address = {Vienna, Austria},
    year = {2022},
    url = {https://www.R-project.org/},
    version = {4.3.2},
  }

@article{neath1997regression,
 title={Regression and time series model selection using variants of the Schwarz information criterion},
author={Neath, Andrew A and Cavanaugh, Joseph E},
journal={Communications in Statistics - Theory and Methods},
volume={26},
number={3},
pages={559--580},
year={1997},
publisher={Taylor \& Francis},
doi={10.1080/03610929708831934}
}

@Manual{rizopoulos2023,
  title = {GLMMadaptive: Generalized Linear Mixed Models using Adaptive Gaussian Quadrature},
  author = {Dimitris Rizopoulos},
  year = {2023},
  note = {https://drizopoulos.github.io/GLMMadaptive/, https://github.com/drizopoulos/GLMMadaptive},
}

@article{debruine2021,
author = {Lisa M. DeBruine and Dale J. Barr},
title ={Understanding Mixed-Effects Models Through Data Simulation},
journal = {Advances in Methods and Practices in Psychological Science},
volume = {4},
number = {1},
pages = {2515245920965119},
year = {2021},
doi = {10.1177/2515245920965119},
URL = { https://doi.org/10.1177/2515245920965119},
eprint = {https://doi.org/10.1177/2515245920965119}
}

@incollection{bozdogan1988icompcriterion,
  title={ICOMP: A New Model-Selection Criterion},
  author={Bozdogan, Hamparsum},
  booktitle={Classification and Related Methods of Data Analysis},
  editor={Bock, Hans H.},
  year={1988},
  publisher={Elsevier Science Publishers B. V. (North-Holland)},
  address={Amsterdam},
  pages={599--608}
}

@article{HURVICHTSAI1989,
    author = {Hurvich, Clifford M. and Tsai, Chih-Ling},
    title = {Regression and time series model selection in small samples},
    journal = {Biometrika},
    volume = {76},
    number = {2},
    pages = {297-307},
    year = {1989},
    month = {06},
    issn = {0006-3444},
    doi = {10.1093/biomet/76.2.297},
    url = {https://doi.org/10.1093/biomet/76.2.297},
    eprint = {https://academic.oup.com/biomet/article-pdf/76/2/297/737009/76-2-297.pdf},
}

@inproceedings{akaike1973,
  author = {Akaike, Hirotugu},
  title = {Information Theory and an Extension of the Maximum Likelihood Principle},
  booktitle = {Proceedings of the 2nd International Symposium on Information Theory},
  editor = {Petrov, Boris N. and Csaki, F.},
  year = {1973},
  pages = {267-281},
  publisher = {Akademiai Kiado},
  address = {Budapest}
}
 \end{document}